\newcommand*{\addFileDependency}[1]{
\typeout{(#1)}
%
%
\@addtofilelist{#1}
%
\IfFileExists{#1}{}{\typeout{No file #1.}}
}\makeatother
\newcommand*{\myexternaldocument}[1]{%
\externaldocument{#1}%
\addFileDependency{#1.tex}%
\addFileDependency{#1.aux}%
}
\newcolumntype{.}{D{.}{.}{-1}}
\newcolumntype{d}[1]{D{.}{.}{#1}}
\theoremstyle{definition}
\newtheorem{theorem}{Theorem}[section]
\newtheorem{proposition}[theorem]{Proposition}
\newtheorem{assumption}{Assumption}
\newcommand{\R}{\ensuremath{\mathbb{R}}}
\newcommand{\bbone}{\ensuremath{\mathbbm{1}}}
\newcommand{\E}{\ensuremath{\mathbb{E}}}
\newcommand{\Var}{\text{Var}}
\def\b1{\boldsymbol{1}}
\def\spacingset#1{\renewcommand{\baselinestretch}%
{#1}\small\normalsize} \spacingset{1}
\begin{document}

\pagestyle{plain}

\newcommand{\blind}{0}

\newcommand{\tit}{Estimating Racial Disparities in Emergency General Surgery} 

\if0\blind

{\title{\tit\thanks{
 The Pennsylvania Health Cost Containment Council (PHC4) is an independent state agency responsible for addressing the problems of escalating health costs, ensuring the quality of health care, and increasing access to health care for all citizens. While PHC4 has provided data for this study, PHC4 specifically disclaims responsibility for any analyses, interpretations or conclusions. Some of the data used to produce this publication was purchased from or provided by the New York State Department of Health (NYSDOH) Statewide Planning and Research Cooperative System (SPARCS). However, the conclusions derived, and views expressed herein are those of the author(s) and do not reflect the conclusions or views of NYSDOH. NYSDOH, its employees, officers, and agents make no representation, warranty or guarantee as to the accuracy, completeness, currency, or suitability of the information provided here. This publication was derived, in part, from a limited data set supplied by the Florida Agency for Health Care Administration (AHCA) which specifically disclaims responsibility for any analysis, interpretations, or conclusions that may be created as a result of the limited data set. The authors declare no conflicts. Rachel Kelz is funded by a grant from the National Institute on Aging, R01AG060612.}}
\author{ Eli Ben-Michael\thanks{Carnegie Mellon University, Pittsburgh, PA, Email: ebenmichael@cmu.edu}
\and Avi Feller\thanks{University of California, Berkeley, Berkeley, CA, Email: afeller@berkeley.edu}
\and Rachel Kelz\thanks{University of Pennsylvania, Philadelphia, PA, Email: rachel.kelz@pennmedicine.upenn.edu}
\and Luke Keele\thanks{University of Pennsylvania, Philadelphia, PA, Email: luke.keele@gmail.com}
}

\date{\today}

\maketitle
}\fi

\if1\blind
\title{\bf \tit}
\maketitle
\fi

\begin{abstract}

Research documents that Black patients experience worse general surgery outcomes than white patients in the United States. 
In this paper, we focus on an important but less-examined category: the surgical treatment of emergency general surgery (EGS) conditions, which refers to medical emergencies where the injury is internal, such as a burst appendix. Our goal is to assess racial disparities for common outcomes after EGS treatment using an administrative database of hospital claims in New York, Florida, and Pennsylvania, and to understand the extent to which differences are attributable to patient-level risk factors versus hospital-level factors, as well as to the decision to operate on EGS patients. To do so, we use a class of linear weighting estimators that re-weight white patients to have a similar distribution of baseline characteristics to Black patients. This framework nests many common approaches, including matching and linear regression, but offers important advantages over these methods in terms of controlling imbalance between groups, minimizing extrapolation, and reducing computation time. Applying this approach to the claims data, we find that disparities estimates that adjust for the admitting hospital are substantially smaller than estimates that adjust for patient baseline characteristics only, suggesting that hospital-specific factors are important drivers of racial disparities in EGS outcomes. We also find little evidence that the decision to operate exacerbates racial disparities.
\end{abstract}

\begin{center}
\noindent Keywords:
{Risk Adjustment, Weighting, Racial Disparities}
\end{center}

\clearpage
\spacingset{1.5} 


\singlespacing

\section{Introduction: Racial Disparities in Health Care}

A substantial research literature documents that Black patients experience worse general surgery outcomes than white patients in the United States 
\citep{alavi2012racial,esnaola2008race,cooper1996surgery,silber2015examining}. 
In this paper, we focus on racial disparities for a related but less-examined category: the surgical treatment of emergency general surgery (EGS) conditions. EGS refers to medical emergencies where (unlike for trauma) the injury is likely ``endogenous,'' such as a burst appendix \citep{shafi2013emergency}. More than 800,000 emergency operations in the United States each year are for EGS conditions, and such conditions account for more hospital admissions than that of a new diagnosis of diabetes, cancer, coronary heart disease, heart failure, stroke, or HIV \citep{gale2014public}. Unlike general surgery procedures, which are scheduled and often done in an out-patient setting, EGS operations are typically in-patient procedures that occur after a chronic condition becomes acute. As such, treatment for EGS conditions is a critical component of emergency health care that differs from general surgery. 

In our study, we seek to assess the extent of racial disparities in adverse events following EGS treatment. To do so, we analyze a large administrative dataset based on all-payer hospital discharge claims from New York, Florida, and Pennsylvania in 2012-2013. We compare Black patients to white patients who are admitted for an EGS condition across 432 hospitals. In our analysis, we seek to to carefully adjust for baseline differences in the Black and white populations to understand the extent to which differences are: (1) attributable to patient-level risk factors versus hospital-specific factors, and (2) attributable to the decision to operate. 

Using statistical methods to assess disparities after adjusting for baseline differences between groups has a long history in health research and the social sciences \citep{fortin2011decomposition}. 
Linear regression is by far the most common adjustment approach; a specific implementation known as the \emph{Kitagawa-Oaxaca-Blinder} (KOB) decomposition is the workhorse method \citep{kitagawa1955components, Blinder1973, Oaxaca1973}; see \citet{sen2014using} and \citet{basu2015using} for examples of this approach in health research. Linear regression can perform poorly in practice, however, often relying on substantial extrapolation to adjust outcomes between groups \citep{sloczynski2020average}. Alternative methods like matching can avoid extrapolation \citep{silber2013characteristics}, but at the cost of poor balance in large administrative data sets.
The choice of what to adjust for is also critical. Standard approaches like the KOB decomposition often adjust for all available baseline covariates. Changing the adjustment set, however, can also lead to large changes in estimated disparities \citep{jackson2018decomposition}. 

Our methodological contribution is to develop statistical tools for estimating racial disparities using linear weighting estimators, which include both regression and matching as special cases. Building on a growing literature in causal inference and econometrics \citep[see][for a review]{benmichael2021_review}, we first characterize the error of a linear weighting estimator for the average surgery outcome for one group, adjusted for a selected set of baseline covariates to follow a particular target distribution. For example, we consider the average outcome for white patients adjusted to have the same baseline characteristics as Black patients.
We then show that a regularized form of the Oaxaca-Blinder estimator controls this error, and extend the decomposition to include hospital differences. Next, we show how to estimate the weights via a constrained optimization problem, which includes both matching and (penalized) regression as special cases. Importantly, we can constrain the weights to be non-negative, thus avoiding extrapolation. We can also directly control the bias-variance trade-off with a single tuning parameter in the optimization problem. We then show how to use our framework for estimands that decompose overall disparities based on the treatment decision, following recent proposals from \citet{jackson2018decomposition} and \citet{Yu2023_decomp}. For instance, we consider a counterfactual scenario in which Black patients have the same odds of undergoing surgery as white patients, conditional on a subset of covariates. This allows us to further decompose observed disparities in order to assess the role of the decision to perform surgery on patients admitted for an EGS condition.

We find that before adjustment, Black patients appear to have better outcomes than white patients. However, once we adjust for baseline covariates, we see significant disparities in the rate of adverse events and the length of stay for both the full patient population and the subset that receives operative care. Further, we find that adjusting for admitting hospital largely eliminates these estimated disparities, though Black patients still have longer lengths of stay in the hospital than re-weighted white patients. This provides additional evidence that variation across hospitals is an important driver of health disparities and that interventions targeted at hospital quality may be critical for further reducing racial disparities in surgical outcomes. Finally, we find that differences in the decision to treat EGS patients have little bearing on overall disparities.

Complementing our substantive results, we argue that three key features make the methodology we develop particularly suited to clinical applications such as the one we consider here.  First, many clinical applications use claims databases with large sample sizes. We show that our method is fast computationally, especially relative to matching, making it feasible with the large samples. Second, accounting for differences within clusters, such as hospitals, is a critical element of the research design for studying racial disparities in a range of applications. Our proposed approach naturally generates weights to balance comparisons within such clusters. Finally, with large numbers of patient covariates, there may be some subset of the covariates that interact. However, selecting relevant interactions is a high-dimensional variable selection problem that is separate to the estimation of the weights. We outline how to use sample splitting and a random forest to discover key interactions for adjustment. 

Our article proceeds as follows. In Section~\ref{sec:application_intro}, we outline the details of our dataset and study. In Section~\ref{sec:setup} we set up the statistical framework and review existing approaches. In Section~\ref{sec:bal_weights} we develop regularized linear estimators for adjusting covariate differences between a focal group and comparison group, and apply this to estimate racial disparities in our setting. In this section, we also discuss accounting for hospital-level interactions, expanding basis functions with interactions, and hyper-parameter selection.  In Section~\ref{sec:results}, we analyze the data on racial disparities in EGS care. In Section~\ref{sec:conc}, we conclude.

\subsection{Racial Disparities in Emergency General Surgery in PA, NY, FL}
\label{sec:application_intro}

Our study uses a dataset based on all-payer hospital discharge claims from New York, Florida and Pennsylvania in 2012-2013. We restricted the study population to all patients admitted for inpatient care emergently, urgently, or through the emergency department with a diagnosis of an acute general surgical condition. We classified acute general surgical condition types using a modified list of 124 International Classification of Diseases, Ninth Revision, Clinical Modification (ICD-9-CM) codes that represent the scope of emergency general surgery \citep{shafi2013emergency}. Importantly, patients with EGS conditions may or may not be treated with an operation. Our data set includes all patients with an EGS condition, and some subset of these patients had surgery.

All patients were classified into one of nine possible surgical conditions (resuscitation, general abdominal, upper gastrointestinal, colorectal, hernia, intestinal obstruction, hepatobiliary, skin and soft tissue, and vascular) and then further classified into 51 specific acute EGS conditions; the indicators for these 51 conditions are important variables for adjustment. Patient demographic and clinical characteristics were also abstracted from the claims datasets. First, we used Elixhauser indices to define 31 comorbidities \citep{elixhauser1998comorbidity}. Next, we developed a measure of patient frailty using a set of specific ICD-9-CM codes which represent clinical manifestations of frail patients in administrative data \citep{kim2014measuring}. We also defined an indicator for severe sepsis using the Angus implementation for severe sepsis algorithm \citep{angus2001epidemiology}. Table~\ref{tab:covs} in the appendix contains a full list of the baseline covariates. Our primary outcome is the presence of an adverse event within 30 days after admission. We define an adverse event as the presence of either death, a complication, or a prolonged length-of-stay. As a secondary outcome, we use hospital length-of-stay; we top-code this at 30 days, which affects 2.8\% percent of the sample. 

In our data, race and Hispanic ethnicity is determined from self-reported records. We restrict the sample to patients who self-report as either white or Black/African American and exclude patients that self-report as Hispanic. After these exclusion criteria, in our data, there are 133,174 Black patients and 556,331 white patients across 432 hospitals. In the data, almost 60\% of the patients undergo surgery as treatment for an EGS condition. Unadjusted, the risk of an adverse event following surgery for an EGS condition is 7\% lower for Black patients than for white patients (Risk ratio: 0.93, 95\% CI: 0.92, 0.94). In the unadjusted analysis, Black patients show no meaningful difference in terms of length of stay when compared to white patients.

These unadjusted differences in outcomes, however, might be driven in part by baseline differences in the health status of Black and white patients. There are critical differences across the two patient populations in our data. Black patients tend to be younger, are more likely to be enrolled in Medicaid, and have notably higher rates of diabetes and hypertension (see Table~\ref{tab:bal} in the appendix for these balance statistics). Moreover, the unadjusted differences in outcomes compare patients who received care at different hospitals. Thus, these unadjusted differences likely also capture differences in quality of care across hospitals, as well as possible unobserved differences in the surrounding areas, such as the availability of other services. For instance, \citet{silber2015examining} found no racial disparities in outcomes when using within-hospital comparisons for general surgery procedures.  The goal of our analysis is therefore to estimate differences in outcomes between Black and white patients after adjusting for observable baseline characteristics as well as the hospital in which the surgery was performed. In addition, there may be racial disparities in which patients are selected for treatment via surgery. As such, we consider three main analyses. First, we find weights that control the imbalance in patient-level baseline characteristics, co-morbidities, and surgery type between re-weighted white patients and Black patients within each state. Second, we find weights that also control this imbalance within each hospital. Third, we also perform a decomposition analysis that assesses whether disparities in outcomes are exacerbated by racial disparities in the decision to operate on EGS patients.

\section{Framework, Notation, and Assumptions}
\label{sec:setup}

There is a long history across health and the social sciences of quantifying group level disparities, and in trying to understand the extent to which observable characteristics explain these differences \citep{Yu2023_decomp}. For overview discussions, see, among others: \citet{fortin2011decomposition} in economics, \citet{jackson2020meaningful} in health, and \citet{lundberg2021gap} in sociology. In this section, we review this approach as applied to racial disparities in EGS outcomes. Next, we outline notation and review a series of different estimands that can describe disparities.

\subsection{A General estimand}
\label{sec:prelim}

We begin by formalizing a general estimand and reviewing the key assumptions required to  measure racial disparities that adjust for background characteristics. In our study, the patient population is indexed by $i=1,\dots,n$, and we denote patient race using a binary variable $G_i \in \{0,1\}$ where $G_i = 1$ indicates that the patient is Black and $G_i = 0$ that the patient is white. In addition to race, we observe a set of baseline covariates $\mathbf{X}_{i} \in \mathcal{X} \subset \R^d$ and an outcome $Y_i$. 
There are two particularly important variables in our analysis: (i) whether patient $i$ undergoes surgery, which we denote as $W_i \in \{0,1\}$,\footnote{While we have a well-defined set of EGS conditions and a set of patients that have all been diagnosed with one of these conditions, 40\% of patients that are diagnosed with these conditions do not receive surgery.} and (ii) which hospital treats patient $i$, which we denote as $H_i \in \{1,\ldots,J\}$.
We assume that the tuples of patient information $(X_i, G_i, Y_i)$ are sampled i.i.d. from some distribution $\mathcal{P}$, and we drop the patient subscript $i$ when convenient for notation. We let $n_g$ denote the number of patients in our sample with race $g$, and denote the sample average outcomes and covariates for patients of race $g$ as $\bar{Y}_g$ and $\bar{X}_g$, respectively. Similarly, we denote the $n_g \times d$ matrix of sample covariate values for group $g$ as $X_g \in \R^{n_g \times d}$.

Informally, our analysis is focused on measuring disparities in outcomes between patients across groups, while ``adjusting'' for different sets of covariates.
To formalize this, let $V_i \in \mathcal{V} \subseteq \mathcal{X}$ be a subset of the covariates.
The primary mathematical object in our analysis is the expected outcome, conditional on race $G$ and selected covariates $V$,  $m(v,g) \equiv \E[Y \mid V = v, G = g]$.
Our estimand is then the expected value of this conditional expectation when the chosen covariates $V$ are drawn from some distribution $\mathcal{P}_v^\ast$ and race $G$ is set to $g$:
\begin{equation}
  \label{eq:estimand}
  \mu^{\mathcal{P}_v^\ast}_g = \E_{V \sim \mathcal{P}_v^\ast}[m(V, g)] = \int m(v,g) d\mathcal{P}_v^\ast(v).
\end{equation}
This estimand follows the nonparametric framing proposed by \citet{Barsky2002} to decompose racial disparities in wealth, generalizing it to account for arbitrary target distributions $\mathcal{P}^\ast$; see also \citet{jackson2022observational}. This quantity measures what we would expect for patients of race $g$, restricted to the population where other patient-level information is distributed according to $\mathcal{P}^\ast$. 
Importantly, this target quantity is a \emph{statistical} estimand rather than a causal estimand, since it does not necessarily reflect a possible intervention.

There are two key components in defining the estimand $\mu_g^{\mathcal{P}^\ast_v}$. First and foremost, we must select the covariates $V$ in the adjustment set. This is the most important choice in the analysis. As we will show in our data below, disparities that appear when adjusting for one set of variables may not appear when adjusting for a different set of variables. Second, we must select the target distribution of the covariates to average over, $\mathcal{P}^\ast_v$, as this defines which distributions we are comparing. Most clinical studies of racial disparities focus on estimands of this form, and our analysis in Section~\ref{sec:results} uses several special cases of this general estimand. We describe these special cases and their potential limitations next.

\subsubsection{Standardizing white patients' covariates to the distribution of Black patients' covariates}
\label{sec:standard}

First, we consider an estimand commonly used in clinical studies \citep[e.g.,][]{silber2013characteristics,silber2014racial,silber2015examining,rosenbaum2013using}, that seeks to estimate the expected outcome for white patients as-if other selected patient-level covariates $V$ followed the distribution for Black patients.
We let $\mu^{g'_v}_{g}$ be shorthand notation for setting $\mathcal{P}^\ast_v$ to the conditional distribution of the selected covariates $V$ given $G = g'$, and then denote the target quantity as $\mu^{1_v}_0$. We compare this to $\mu^{1}_1$, the mean observed outcome for Black patients, which we estimate via the sample average. For continuous outcomes such as the length of stay, we make this comparison through the adjusted difference between Black and white patients $\mu^{1}_1 - \mu^{1_v}_0$. With binary outcome measures, we additionally consider the adjusted risk ratio $\mu^{1}_1 / \mu^{1_v}_0$.

The key assumption underlying this estimand and our ability to estimate $\mu^{1_v}_0$ non-parametrically is a notion of overlap:
\begin{assumption}[Overlap]
$ P(G = 1 \mid V = v) < 1$ for all $v \in \mathcal{V}$.
\end{assumption}
This overlap assumption rules out the possibility that there is some combination of patient-level information that can deterministically predict whether the patient is Black. This assumption ensures that for any Black patient in the population we can find a white patient with similar selected characteristics $V$, given a large enough sample.

There are many potential choices for the covariate set to adjust for, and we consider several in our analysis. First, we use \emph{all} of the patient-level covariates in our adjustment set using the full EGS patient population. Next, we use all the patient covariates in the subset that was selected for operative care ($W=1$). Comparing the magnitude of the estimated disparities in the full study population and the subset that receives surgery gives a sense of the relationship between the surgery decision and disparities. Second, we add hospital-level information in our adjustment set. One way to do so would be to include hospital-level covariates. This approach, however, does not account for any unmeasured (or unmeasurable) factors, such as features of the surrounding community or other local supports. Instead, we directly adjust for the hospital that treats patient $i$, $H_i$, and only compare patients within the same hospital. This accounts for all observable and unobservable differences across hospitals by restricting comparisons between racial groups to be within the same hospital and then marginalizing over the hospitals. 

Finally, we consider a more nuanced analysis that takes into account equity concerns, reflecting recent proposals from \citet{jackson2018decomposition}, \citet{jackson2020meaningful}, and \citet{jackson2022observational}.\footnote{This is closely related to the literature on ``good'' and ``bad'' controls; see, for example, \citet{cinelli2020controls}.}  In their framework, \emph{allowable} covariates are those for which adjusting for differences between groups is justified by clinical and ethical considerations; examples include age and sex. By contrast, factors that reflect socioeconomic status, such as insurance type, are \emph{non-allowable} for adjustment, in the sense that these factors contribute to health disparities. For example, differences in surgery decisions based on income would be considered unethical, and adjusting for these differences between groups would therefore mask important inequities.  

Specifically, \citet{jackson2020meaningful} proposes partitioning covariates into four different sets $X =  (X^n, W, X^y, X^w)$. The first two covariate sets are $X^n$, the non-allowable covariates, and $W$, the surgery decision. The allowable covariates are split into $X^y$ and $X^w$, the sets of ``outcome-allowable'' and ``treatment-allowable'' covariates, respectively. Outcome-allowable covariates are typically demographic measures --- in our analysis age and sex --- while treatment-allowable covariates are typically clinical measures related to health history, such as the presence of comorbidities, that inform the surgery decision. Slightly abusing notation, we denote $\mathcal{P}_{X^y}^1$ as the conditional distribution of the outcome-allowable covariates $X^y$ given that race $G = 1$. Then the estimand $\mu_0^{{1}_{X^y}}$, corresponds to the average observed outcome for white patients, standardizing the distribution of \emph{outcome-allowable covariates} to be equal to that for Black patients.

\subsubsection{Decomposing disparities}
\label{sec:decompose}

Thus far, we have considered estimating an health disparity by standardizing white patients' covariates to the distribution of Black patients. For this approach, we can either adjust for the full covariate vector $V$ or partition the covariates into different sets to account for equity concerns. One drawback to this approach is that it does not formally account for how the treatment decision may contribute to disparities. In our application, for example, clinicians may be interested in understanding how the decision to treat EGS patients via surgery affects disparities. Next, we consider two different methods that integrate the decision to treat into the disparity estimation process. Both of these methods decompose the estimated disparity into components that are based on the treatment decision.
These decompositions involve estimands that are based on frameworks with assumptions that allow for causal interpretations. Here, we consider them to be important and relevant \emph{statistical} estimands.

\paragraph{Stochastic intervention decomposition.}

First, \citet{jackson2018decomposition,jackson2020meaningful,jackson2022observational}  propose a decomposition based on a counterfactual scenario where Black patients have the same odds of undergoing treatment as white patients, conditional on outcome- and treatment-allowable covariates.\footnote{\citet{jackson2020meaningful} embeds these concepts into a potential outcomes framework, where there are two potential outcomes for each patient depending on whether or not they undergo surgery, and formalizes this estimand as the expected outcome under a stochastic intervention. Here we focus on the resulting \emph{statistical} estimand that is equivalent to the causal estimand under several identifying assumptions \citep{jackson2020meaningful}.} Using our notation, the resulting target distribution can be written as:
\[
  \mathcal{P}_{w | 0}(X = x) \equiv P(W = w \mid G = 0, X^y = x^y, X^w = x^w) P(X^y = x^y, X^w = x^w, X^n = x^n \mid G = 1),
\]
for $w \in \{0,1\}$, where $P(W = w \mid G = 0, X^y = x^y, X^w = x^w)$ is the probability that a white patient with allowable covariates $x^y$ and $x^w$ receives treatment. This distribution breaks the observed relationship among Black patients between the surgery decision $W$ and the  $X^y, X^w, X^n$ covariates. Instead, this estimand sets the distribution of the surgery decision to follow the observed relationship among white patients, with the other covariates based on the Black patient population.

With these quantities, we follow \citet{jackson2020meaningful} and define the following decomposition based on three different estimands:
\begin{equation}
  \label{eq:decomp_jackson}
  \underbrace{\mu_1^1 - \mu_0^{{1}_{X^y}}}_{X^y\text{-adjusted disparity}} = \underbrace{\mu_1^1 - \mu_1^{\mathcal{P}_{w|0}}}_{\text{disparity reduction}} + \underbrace{\mu_1^{\mathcal{P}_{w|0}} - \mu_0^{{1}_{X^{y}}}}_{\text{residual disparity}}.  
\end{equation}

On the left-hand side is the group disparity after adjusting for outcome-allowable covariates only, $\mu_1^1 - \mu_0^{{1}_{X^y}}$. In the context of our analysis, this is the disparity in adverse outcomes for the entire EGS population after adjusting for age and sex only. This disparity is then decomposed into two terms. 

The second term, $\mu_1^1 - \mu_1^{\mathcal{P}_{w|0}}$, compares the observed average outcome for Black patients $\mu_1^1$ with the average outcome if the odds of performing surgery are equalized across Black and white patients. This term is the \emph{disparity reduction} for Black patients, which measures how adverse outcomes for Black patients would change after equalizing the odds of treatment via surgery. The third term in the decomposition, $\mu_1^{\mathcal{P}_{w|0}} - \mu_0^{\mathcal{P}_{1}^{y}}$, compares outcomes for Black patients after equalizing the odds of surgery to  outcomes for white patients after adjusting for outcome-allowable covariates. This term is referred to as the \emph{residual disparity}, the amount of observed disparity that would remain after equalizing the odds of the surgery. If the outcome-allowable-covariate-adjusted disparity and residual disparity are close to each other this suggests that the decision to perform surgery is not a meaningful driver of observed disparities.

\paragraph{Unconditional decomposition.}
\citet{Yu2023_decomp} propose an alternative approach that decomposes the unadjusted disparity into four components based on: (i) the baseline disparity among Black and white patients if no one receives treatment; (ii) the difference in treatment prevalence; (iii) the effect of treatment between the two groups; and (iv) differential selection into treatment based on its effects.  Following their identifying assumptions, we can write the resulting statistical estimand in terms of the estimand in Equation~\eqref{eq:estimand}. Using the full vector of covariates, $V$, the target distribution is:
\[
\mathcal{P}_{wg}(X = x) \equiv \bbone\{W = w\} P(X = x \mid G = g).
\]
Using $\mu_g^{wg}$ as a shorthand for $\mu_g^{\mathcal{P}_{wg}}$, and denoting the rate that group $g$ receives surgery as $p_g \equiv P(W = 1 \mid G = g)$, we can then decompose the unadjusted disparity into the four components:
\begin{equation}
  \label{eq:decomp_yu}
  \underbrace{\mu_1^1 - \mu_0^0}_{\text{total}} = \underbrace{\mu_1^{01} - \mu_0^{00}}_{\text{baseline}} + \underbrace{(\mu_0^{10} - \mu_0^{00}) \times (p_1 - p_0)}_{\text{prevalence}} + \underbrace{p_1 \times (\mu_1^{11} - \mu_1^{01} - \mu_0^{10} + \mu_0^{00})}_{\text{effect}} + 
  \underbrace{\text{Cov}_0(W,\tau) - \text{Cov}_1(W, \tau)}_{\text{selection}}.  
\end{equation}
where $\tau$ represents the patient-level effect of surgery on the adverse event outcome, and $\text{Cov}_g(W,\tau)$ is the covariance between the decision to operate and the treatment effect of surgery for group $g$. These covariance terms will be positive if those patients that are more likely to benefit from surgery are more likely to receive surgery.  

In this decomposition, the ``baseline'' term refers to the average difference in adverse events if no patients in either group are treated via surgery. The ``prevalence'' term captures how much of the disparity in adverse events is due to any differential prevalence of surgery by race. The ``effect'' term reflects the possibility that the average effect of surgery differs by race. Finally, the ``selection'' term captures whether differential selection into the decision to operate contributes to the total disparity. This decomposition fully captures all aspects of how the disparity in outcomes might arise. In order to estimate each of these terms via weighting, we rearrange terms as follows:

\begin{equation*}
\label{eq:decomp_est}
    \text{Selection} = \text{Total} - \text{Baseline} - \text{Prevalence} - \text{Effect}
\end{equation*}

Finally, note that this four part decomposition is compatible with the decomposition proposed by \citet{jackson2020meaningful}: the decomposition in Equation~\eqref{eq:decomp_yu} can be viewed as a further decomposition of that in Equation~\eqref{eq:decomp_jackson}. See \citet{Yu2023_decomp} for further discussion.

\subsection{Review: adjustment via linear estimators}
\label{sec:ob}

We now use this setup to describe existing approaches to measure and isolate racial disparities, placing them in a common estimation framework. In this section, we will focus on estimating $\mu_0^{1_v}$, the expected outcome for white patients if the selected patient-level characteristics $V$ were drawn from the distribution of Black patients;
in Section~\ref{sec:bal_weights} we extend to a generic target population $\mathcal{P}^\ast$.
To begin, note that the average outcome among white patients will in general be a biased estimator for this quantity:
\[
  \E\left[\bar{Y}_0\right] - \mu_0^{1_v} = \E[m(V, 0) \mid G = 0] - \E[m(V, 0) \mid G = 1].
\]
This bias is due to the selected patient-level information $V$ varying systematically between white and Black patients, and is precisely the difference we are attempting to account for.

One classical approach to account for this difference is based on regression. The Kitagawa-Oaxaca-Blinder decomposition \citep{kitagawa1955components,Blinder1973, Oaxaca1973} separately regresses the outcomes $Y$ on the selected covariates $V$ for Black and white patients via ordinary least squares, yielding coefficients $\hat{\beta}_1$ and $\hat{\beta}_0$. It then estimates $\mu_0^{1_v}$ by taking the average prediction for Black patients, if we were to set the race variable to be white in the regression:
\begin{equation}
  \label{eq:ob_est}
  \hat{\mu}_0^{1_v KOB} \equiv \hat{\beta}_0 \cdot \bar{V}_1,
\end{equation}
where $\bar{V}_1$ is the average of the selected covariates for Black patients.

This regression approach is an imputation estimator: it uses the predictions from the regression to impute expected patient outcomes for Black patients if the race indicator $G$ changed but the selected covariates $V$ remained fixed. Variants of this approach are common in health disparities research \citep{alavi2012racial,esnaola2008race,cooper1996surgery}.
As \citet{Kline2011} discusses, this regression approach can also be viewed as a linear combination of outcomes for the white patient population,
\[
  \hat{\mu}_0^{1 KOB} = \sum_{G_i = 0} \hat{\gamma}_i^{KOB} Y_i, \text{  where  } \hat{\gamma}_i^{KOB} = V_i'(V_0'V_0)^{-1}\bar{V}_1,
\]
where $V_0$ is the matrix of selected covariates for white patients.
Typically, KOB-style analyses are also concerned with decomposing observed disparities; here
we focus on estimating the adjusted difference component of this decomposition, while retaining the terminology as in \citet{Kline2011}. 

Other common estimators of adjusted racial differences are similarly linear in outcomes, using specific types of regression modeling, inverse propensity weighting, or matching to estimate $\mu_0^{1_v}$. For some of the many examples of related literature, see: \citet{Barsky2002,dinardo1995labor,silber2014racial,silber2013characteristics,rosenbaum2013using,silber2015examining,chernozhukov2013inference, firpo2018decomposing, machado2005counterfactual,melly2005decomposition,jackson2020meaningful}.

\section{Regularized linear estimators for adjustment}
\label{sec:bal_weights}

In this section we extend the KOB approach for a general comparison distribution $\mathcal{P}_v^\ast$.
Specifically, we define the KOB estimate as $\hat{\mu}_g^{\mathcal{P}_v^\ast KOB} \equiv \hat{\alpha}_g + \hat{\beta}_g \cdot \E_{\mathcal{P}^\ast_v}[V]$, where $\hat{\alpha}_g$ and $\hat{\beta}_g$ are the solutions to the ordinary least squares problem
\[
  \min_{\alpha, \beta} \ \frac{1}{n_g}\sum_{G_i = g} \left(Y_i - \alpha - \beta \cdot V_i\right)^2.
\]
This builds on a long literature generalizing linear regression estimators in similar contexts; see \citet{strittmatter2021gender} for a related discussion applied to the gender pay gap in Switzerland.
While we focus here on adjusting for the set $V$, this estimator accommodates many types of analyses, as we describe in Section \ref{sec:ob}


We first consider the statistical performance of this KOB estimator in non-parametric settings where the linear model might not be correct. We then propose a ridge-regularized version that accounts for interactions and non-linearities in the outcome model. 
The most important interaction is that between hospitals and the covariates; we incorporate this by fitting separate models for each hospital and partially pooling.\footnote{The interaction between the surgery decision and the covariates is also important. The estimates we report in our analysis in Section~\ref{sec:results} use the surgery decision in different ways, where appropriate we run separate analyses for the population that do and do not receive surgery to account for this interaction.} Then we  connect this ridge regularized approach to a general linear weighting estimator. 

\subsection{A regularized Kitagawa-Oaxaca-Blinder approach}
\label{sec:regularized_ob}

\subsubsection{Statistical performance of KOB} 
One limitation of the Kitagawa-Oaxaca-Blinder approach outlined in Section \ref{sec:ob} is that it only accounts for covariates linearly, and so does not take into account potential non-linearities and interactions between the covariates $V$. These interaction terms may be important predictors of patient outcomes, and failing to account for these may lead to irreducible error in our estimates. Formally, we can characterize the design-conditional mean square error of the KOB estimator in general nonparametric settings with comparison distribution $\mathcal{P}^\ast_v$.

\begin{proposition}[MSE for Kitagawa-Oaxaca-Blinder]
\label{prop:ob_mse}
  Assume that the covariates are centered so that $\frac{1}{n_g}\sum_{G_i = g} V_i = 0$, and that $\Var(Y \mid V = v, G = g) = \sigma^2$ for all $x,g$. Then the design-conditional mean square error is
  \[
    \E\left[\left(\hat{\mu}_g^{\mathcal{P}_v^\ast KOB} - \mu_g^{\mathcal{P}^\ast_v}\right)^2 \mid V, G\right] = \underbrace{d_\text{eff} \ \frac{\sigma^2}{n_g}}_\text{estimation error} + \underbrace{\left(\E_{\mathcal{P}_v^\ast}\left[V\right]\Sigma_g^{-1}\rho_g - \E_{\mathcal{P}_v^\ast}[m(V, g) - \bar{m}_g]\right)^2}_\text{approximation error},
  \]
  where $\Sigma_g = \frac{1}{n_g}\sum_{G_i = g} V_i V_i'$, $\rho_g = \frac{1}{n_g} \sum_{G_i = g} V_i (m(V_i, g) - \bar{m}_g)$, $\bar{m}_g = \frac{1}{n_g}\sum_{G_i = g} m(V_i, g)$, and $d_\text{eff} = 1 + \frac{1}{n_g}\sum_{G_i = g} \left(\E_{\mathcal{P}_v^\ast}[V] \Sigma_g^{-1} V_i\right)^2$.
\end{proposition}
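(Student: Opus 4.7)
The plan is to exploit the centering assumption to write $\hat\mu_g^{\mathcal{P}^\ast OB}$ as a linear combination of the outcomes $\{Y_i : G_i = g\}$ in closed form, then perform a bias-variance decomposition.

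First I would observe that with $\frac{1}{n_g}\sum_{G_i=g} X_i = 0$, the OLS normal equations decouple the intercept and slope, yielding $\hat\alpha_g = \bar Y_g$ and $\hat\beta_g = \Sigma_g^{-1}\bigl(\frac{1}{n_g}\sum_{G_i=g} X_i Y_i\bigr)$. Substituting into the definition of the estimator gives
\[
\hat\mu_g^{\mathcal{P}^\ast OB} \;=\; \bar Y_g + \E_{\mathcal{P}^\ast}[X]\,\Sigma_g^{-1}\,\frac{1}{n_g}\sum_{G_i=g} X_i Y_i \;=\; \frac{1}{n_g}\sum_{G_i=g}\bigl(1 + \E_{\mathcal{P}^\ast}[X]\,\Sigma_g^{-1} X_i\bigr)\,Y_i,
\]
so the OB estimator is a linear weighting estimator with weight $w_i = 1 + \E_{\mathcal{P}^\ast}[X]\,\Sigma_g^{-1} X_i$ on unit $i$.

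Next I would decompose $Y_i = m(X_i,g) + \varepsilon_i$ with $\E[\varepsilon_i\mid X,G]=0$ and $\Var(\varepsilon_i\mid X,G)=\sigma^2$, and split the estimator into a signal piece and a noise piece. For the conditional expectation, using $\frac{1}{n_g}\sum_{G_i=g} X_i\, m(X_i,g) = \frac{1}{n_g}\sum_{G_i=g} X_i\bigl(m(X_i,g)-\bar m_g\bigr) = \rho_g$ (the centering kills the $\bar m_g$ term), we get $\E[\hat\mu_g^{\mathcal{P}^\ast OB}\mid X,G] = \bar m_g + \E_{\mathcal{P}^\ast}[X]\,\Sigma_g^{-1}\rho_g$. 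Subtracting the estimand $\mu_g^{\mathcal{P}^\ast} = \E_{\mathcal{P}^\ast}[m(X,g)]$ and writing the latter as $\bar m_g + \E_{\mathcal{P}^\ast}[m(X,g) - \bar m_g]$ yields exactly the approximation-error term in the stated bound.

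For the variance piece, conditional independence of the $\varepsilon_i$ combined with homoscedasticity gives $\Var(\hat\mu_g^{\mathcal{P}^\ast OB}\mid X,G) = \frac{\sigma^2}{n_g^2}\sum_{G_i=g} w_i^2$. Expanding $w_i^2 = 1 + 2\,\E_{\mathcal{P}^\ast}[X]\Sigma_g^{-1} X_i + \bigl(\E_{\mathcal{P}^\ast}[X]\Sigma_g^{-1}X_i\bigr)^2$, the linear cross-term sums to zero (again by centering), leaving $\frac{\sigma^2}{n_g}\bigl(1 + \frac{1}{n_g}\sum_{G_i=g}(\E_{\mathcal{P}^\ast}[X]\Sigma_g^{-1}X_i)^2\bigr) = d_\text{eff}\,\sigma^2/n_g$. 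Combining via MSE = bias$^2$ + variance finishes the proof.

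The argument is essentially bookkeeping, with no real obstacle: the one place where care is needed is to recognize that every use of centering (to eliminate the intercept in OLS, to replace $m(X_i,g)$ by $m(X_i,g) - \bar m_g$ inside $\rho_g$, and to kill the cross-term in the variance expansion) is what produces the clean form stated in the proposition. I would make these three appeals to centering explicit in the write-up.
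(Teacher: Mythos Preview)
Your proposal is correct and follows essentially the same route as the paper: both obtain $\hat\alpha_g=\bar Y_g$ and $\hat\beta_g=\Sigma_g^{-1}\frac{1}{n_g}\sum_{G_i=g}X_iY_i$ from centering, then do a bias--variance decomposition conditional on design. The only cosmetic difference is that the paper introduces the ``noiseless'' least-squares solutions $\tilde\alpha_g=\bar m_g$, $\tilde\beta_g=\Sigma_g^{-1}\rho_g$ as an intermediate and decomposes the error through them, whereas you write the estimator directly in its linear-weighting form $w_i=1+\E_{\mathcal{P}^\ast}[X]\Sigma_g^{-1}X_i$ and compute the conditional mean and variance from that; the computations are identical.
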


Proposition~\ref{prop:ob_mse} shows that the MSE for the Kitagawa-Oaxaca-Blinder estimator consists of two terms.\footnote{In heteroskedastic settings where the residual variance changes with $v$, Proposition~\ref{prop:ob_mse} can be written as an upper bound with the maximum residual variance taking the role of $\sigma^2$.}
First, because the intercept $\hat{\alpha}_g$ and coefficients $\hat{\beta}_g$ are unbiased estimates of the best linear approximation to the true conditional expectation $m(v,g)$, the estimation error corresponds to the variance of $\hat{\mu}_g^{\mathcal{P}_v^\ast KOB}$ with a correctly specified outcome model. As we expect, this variance decreases with the number of patients of race $g$, but there is an additional design effect factor $d_\text{eff}$ that depends on the distribution of the covariates $V$ both in the sample and under $\mathcal{P}_v^\ast$. If $\mathcal{P}_v^\ast$ is far from the empirical distribution of patients of race $g$ --- as measured by the mean of $V$ --- the resulting KOB adjustment will be large, leading to a larger estimation error. Conversely, if the distributions are close, there will be less need for adjustment and the estimation error will be closer to ordinary regression error.

While the estimation error decreases with the sample size, the second component of the MSE --- the approximation error --- does not. This term depends on the quality of the linear approximation to the true conditional expectation. If there are important non-linearities or interactions between covariates then this approximation error can be large.  We turn next to accounting for these terms in order to reduce the approximation error and improve the credibility of the KOB estimate.

\subsubsection{Accounting for hospital components and non-linearity} 

As we discuss in Section \ref{sec:prelim} above, a key element in studies of racial disparities in health is accounting for the admitting hospital.
To do so, we restrict comparisons between racial groups to be within the same hospital, which accounts for both observable and unobservable differences across hospitals. 

To formalize this estimator, we modify the KOB regression above in the case where the hospital $H$ is included as a selected covariate.
First, we decompose the selected covariates into $V_i = (H_i, Z_i)$, where $H_i \in \{1,\ldots,J\}$ is the hospital that treats patient $i$, and $Z_i$ encodes the other selected patient-level characteristics. We then consider an approximation to the conditional expectation with two key components: (1) separate hospital-level fixed effects and coefficients; and (2) further non-linearity and interactions in the other patient characteristics:
\begin{equation}
  \label{eq:interact_model}
  m(v,g) \approx \sum_{j=1}^J \bbone\{H_i = j\}\left(\alpha_{gj} + \beta_{gj} \cdot \phi(Z_i)\right),
\end{equation}
where $\phi(z) \in \R^p$ encodes transformations of the patient characteristics.
In Section~\ref{sec:basis} we discuss how we choose this transformation in a data-driven way for our analysis.

The model in Equation \eqref{eq:interact_model} is  more flexible than the typical linear model:
it allows for separate hospital-specific models, where the relationship between the
patient-level characteristics and the outcome can differ by hospital.
Furthermore, by using an appropriate transformation function $\phi(\cdot)$ we can reduce the approximation error in Proposition \ref{prop:ob_mse}. At the same time, this model can be much more difficult to estimate due to the hospital-level coefficients. For example, in our data there are 406 separate hospitals, leading to 45,066 coefficients in Equation \eqref{eq:interact_model} when we use the full set of patient characteristics, even when $\phi(\cdot)$ is the identity transformation. This can lead to a high estimation error when estimating the model with OLS.

Instead, we choose to take a regularized approach. First, we find coefficients via regularized regression, choosing $\hat{\alpha}_{gj}$ and $\hat{\beta}_{gj}$ to solve
\begin{equation}
  \label{eq:ridge_model}
  \min_{\alpha_g, \beta_g, \mu_{\beta_g}} \frac{1}{n}\sum_{G_i = g}\left(Y_i - \sum_{j=1}^J \bbone\{H_i = j\}\left(\alpha_{gj} + \beta_{gj} \cdot \phi(Z_i)\right)\right)^2 + \lambda \sum_{j = 1}^J \|\beta_{gj} - \mu_{\beta_g}\|_2^2.
\end{equation}

Then we compute a regularized KOB estimate,
\begin{equation}
  \label{eq:reg_ob_est}
  \hat{\mu}_{g}^{\mathcal{P}_v^\ast RKOB} \equiv \sum_{j=1}^JP_v^\ast(H = j)\left(\hat{\alpha}_{gj} + \hat{\beta}_{gj} \cdot \E_{\mathcal{P}_v^\ast}[\phi(Z) \mid H = j]\right),
\end{equation}
where $P_v^\ast(H = j)$ is the marginal probability that hospital $j$ treats a patient under distribution $\mathcal{P}_v^\ast$.

To estimate the coefficients via Equation \eqref{eq:ridge_model}, we regularize the hospital-specific coefficients $\beta_{gj}$ towards a global model $\mu_{\beta_g}$. This allows us to share information across hospitals, potentially improving estimation. The level of pooling $\lambda$ is an important hyperparameter that controls the bias-variance tradeoff. As $\lambda$ decreases, less pooling occurs; when $\lambda = 0$ no information is shared across hospitals. This will have the lowest bias, as there is no regularization at all, but can increase variance. 
Conversely, as $\lambda$ increases, the hospital-specific coefficients become more and more similar; the extreme case where $\lambda \to \infty$ constrains the covariate-outcome relationship to be the same across hospitals, and only allows for baseline levels to differ. This substantially reduces the variance, as it removes the need to estimate separate models for each hospital, but can lead to high bias by ignoring possibly important interactions.
We discuss choosing $\lambda$ based on an alternative characterization of the estimator in the next section.
We inspect the impact of the level of pooling in our simulation study in Section \ref{sec:sim} and in our analysis in the supplementary materials.

\subsection{An equivalent penalized linear estimator}
\label{sec:balancing}
Following our discussion in Section \ref{sec:ob}, we now relate the regularized KOB approach above to a general linear estimator for $\mu_g^{\mathcal{P}_v^\ast}$. These take a linear combination of outcomes in group $G = g$,
\begin{equation}
  \label{eq:general_estimator}
  \hat{\mu}_g^{\mathcal{P}_v^\ast} \equiv \sum_{G_i = g} \hat{\gamma}_i Y_i,
\end{equation}
where $\hat{\gamma}_i$ is the weight in the linear combination placed on patient $i$; these weights may be positive or negative.\footnote{We restrict our attention to linear combination weights that are independent of the outcomes.} Next, we inspect the statistical properties of linear estimators and then find the MSE-optimal linear combination via convex optimization. We then show that the regularized KOB estimator in the previous section is a special case.

\paragraph{Estimation error of linear estimators.}
We begin by inspecting the statistical properties of the linear estimator in Equation \eqref{eq:general_estimator}, following the development in, e.g. \citet{hirshberg2019minimax,Hirshberg2019_amle}.
We can decompose the design-conditional mean square error  of $\hat{\mu}_g^{\mathcal{P}_v^\ast}$ into a bias term and a variance term:
\begin{equation}
    \label{eq:mse}
     \E\left[\left(\hat{\mu}_g^{\mathcal{P}_v^\ast}- \mu_g^{\mathcal{P}_v^\ast}\right)^2 \mid V, G\right] = \underbrace{\left(\sum_{G_i = g} \hat{\gamma}_i m(V_i, g) - \E_{\mathcal{P}_v^\ast}\left[m(V, g)\right]\right)^2}_{\text{bias}^2} + \underbrace{\sum_{G_i = g} \hat{\gamma}_i^2 \E[(Y_i - m(V_i, G_i))^2\mid V_i, G_i]}_{\text{variance}}. 
\end{equation}
The bias term comes from imbalance in the conditional expectation $m$. The variance term is
 the sum of squared weights, weighted by the residual variance for each patient.

Focusing on the flexible model with hospital-level interactions in Equation \eqref{eq:interact_model}, we can show that the bias depends on the difference between the target average of the (transformed) covariates (i) \emph{within} each hospital and (ii) \emph{across} hospitals, as well as the difference in hospital counts:
\begin{equation}
  \label{eq:bias_interact_model}
  \begin{aligned}
    \text{bias} & =  \underbrace{\sum_{j=1}^J \alpha_{gj} \left(\sum_{G_i = g, H_i = j}\hat{\gamma}_i - P^\ast_v(H = j)\right)}_{\text{difference in hospital count}}\\
    & \qquad + \underbrace{\sum_{j = 1}^J \left(\beta_{gj} - \bar{\beta}_g\right) \cdot \left(\sum_{G_i = g, H_i = j}\hat{\gamma}_i \phi(Z_i) -  P_v^\ast(H = j)\E_{\mathcal{P}_v^\ast}[\phi(Z) \mid H = j]\right)}_{\text{difference within hospitals}}\\
    & \qquad + \underbrace{\bar{\beta}_g \cdot \left(\sum_{G_i = g} \hat{\gamma}_i \phi(Z_i) - \E_{\mathcal{P}_v^\ast}[\phi(Z)]\right)}_{\text{difference across hospitals}}.
  \end{aligned}
\end{equation}
This expression shows the importance of within-hospital comparisons.
Only by controlling the differences in patient-level characteristics \emph{within} hospitals can we adjust for hospital-level
differences in how these characteristics correlate with outcomes.
From this, we see that we can evaluate the potential for bias in a linear estimator by inspecting these three differences, while we can evaluate the variance via the sum of the squared weights. We will now use this as a guide to construct weights that control the bias and the variance.

\paragraph{Penalized optimization problem.}
Following proposals from e.g. \citet{zubizarreta2015stable,hirshberg2019minimax}, we find 
linear combination weights $\hat{\gamma}$ by solving the following convex optimization problem:
\begin{equation}
  \label{eq:opt_prob}
  \begin{aligned}
    \underset{\gamma}{\text{min}} \ & \sum_{j=1}^J \left\|\sum_{G_i = g, H_i = j}\gamma_i \phi(Z_i) - P_v^\ast(H = j)\E_{\mathcal{P}_v^\ast}[\phi(Z) \mid H = j]\right\|_2^2 + \lambda \sum_{G_i = g} \gamma_i^2\\
    \text{subject to}~~& \sum_{G_i = g} \gamma_i \phi(Z_i) = \E_{\mathcal{P}_v^\ast}[\phi(Z)]\\
    & \sum_{G_i = g, H_i = j}\gamma_i = P_v^\ast(H = j)\\
    & L \leq \gamma \leq U.
  \end{aligned}
\end{equation}
This optimization problem adapts that of \citet{benmichael2021_lor}, developed for estimating subgroup effects in observational studies, to our racial disparities setting. It has several components relating to the bias and variance terms described above. First, the objective includes the mean square within-hospital difference in the transformed covariates $\phi(Z)$; this corresponds to the within-hospital component of the bias. The other term in the objective is the sum of the squared weights, a proxy for the variance of the linear estimator. The constraints in the optimization problem ensure that the overall difference between the covariates across hospitals is zero, and that the overall amount of weight placed on each hospital is equal to the probability that a patient is treated in the hospital in the target distribution. These terms control the other two components of the bias.

In this optimization problem, the weights are regularized in two ways: (i) the variance penalty in the objective with hyper-parameter $\lambda$ and (ii) the constraint that weights are bounded between $L$ and $U$. To understand the relative contribution of these two terms, first consider the case where $L = -\infty$ and $U = \infty$, which we refer to as ``unrestricted'' balancing weights. In Appendix \ref{sec:proofs} we show that the solution to optimization problem \eqref{eq:opt_prob} in this case is equivalent to the linear weights implied by the regularized KOB estimator~\eqref{eq:reg_ob_est}, with $\lambda$ serving the same role of negotiating the bias-variance tradeoff in both optimization problems. 

Alternatively, we can constrain the weights to be non-negative (i.e. $L = 0, U = \infty$), which we refer to as ``restricted'' balancing weights. With this constraint, the weighted average of the covariates for white patients is restricted to be in the convex hull of all the observed covariates for white patients. Most prominently, this ensures 
that the final estimate is ``sample bounded'' between the minimum and maximum possible outcome value.
Viewed in this light, we see that unrestricted balancing weights (i.e., outcome modeling) can \emph{extrapolate} from the data. This allows for potentially decreased MSE at the cost of increased model dependence \citep{king2006dangers}. 
We consider the impact of these constraints in Section \ref{sec:sim} via simulations.  See \citet{benmichael2021_ascm, Chattopadhyay2021} for more discussion on weighting representations of regression methods. 

The MSE decomposition in Equation~\eqref{eq:mse} also gives a guide for choosing the hyper-parameter $\lambda$. Since $\lambda$ penalizes the sum of the squared weights, it corresponds to the variance term in the MSE. In the special case of homoskedastic errors --- i.e., when $\E[(Y - m(V, G))^2 \mid V, G] = \sigma^2$ --- the variance term simplifies to $\sigma^2$ times the sum of the squared weights. Therefore, we choose $\lambda$ by first regressing the outcome $Y$ on the selected covariates $V$ for racial group $G = g$, then setting $\lambda$ to be the residual variance. This heuristic choice of $\lambda$ places our estimates on one spot of the bias-variance tradeoff, in Appendix \ref{sec:hyperparameter} we discuss tracing out this tradeoff in our data.

Finally, we can view this optimization problem as a continuous relaxation of matching patients of different races, with an exact match on hospitals and a fine balance constraint marginally on each covariate \citep{Yiu2018}. This further highlights the approach's computational efficiency, especially relative to matching.
In particular, optimization problem \eqref{eq:opt_prob} is a Quadratic Program (QP) that is strongly convex for $\lambda > 0$. Traditionally, QPs have been solved with interior point methods, but recent first-order methods using the Alternating Direction Method of Multipliers \citep[ADMM;][]{Boyd2010} have been particularly successful at efficiently solving large QPs. In particular, \citet{osqp} develop OSQP, an efficient ADMM procedure for QPs. The OSQP procedure is particularly efficient in our setting because the within-hospital difference measures induce sparsity that OSQP can leverage via efficient sparse numerical linear algebra libraries \citep[see][for extensive numerical results showing efficiency in big data problems]{kim2022_small_weights}. Compared with matching approaches that solve mixed integer programs, this convex formulation allows us to quickly find the linear combination weights even with the large administrative data sets often found in health-services research. In our analysis in Section~\ref{sec:results}, we compare the computational performance of matching versus this QP approach to balancing weights.

\subsection{Estimation}
\label{sec:estimation}

\paragraph{Point estimation.} We now focus our discussion on how to estimate the EGS racial disparities quantities in Section~\ref{sec:prelim}, which are based on three main estimands. First, all of the disparity measures involve the expected outcome for Black patients,  $\mu^{1}_1$.
We estimate this simply via the sample average outcome among Black patients: $\hat{\mu}_1^1 \equiv \frac{1}{n_1}\sum_{G_i = 1} Y_i$. Second, we are interested in estimating $\mu_{0}^{1_v}$, the expected outcome for white patients as-if the selected covariates $V$ followed the distribution of Black patients. In our analysis in Section~\ref{sec:results} we consider several different covariate sets: all patient-level covariates, all patient-level covariates and the hospital indicator, and outcome-allowable covariates only. We estimate this using the regularized KOB approach, using the empirical distribution of the Black patient population in place of the population distribution.

Finally,  we want to estimate the components of the disparity decompositions discussed in Section~\ref{sec:decompose}.
For the stochastic intervention decomposition of \citet{jackson2020meaningful}, we want to estimate $\mu_1^{\mathcal{P}_{w|0}}$, the expected outcome for Black patients if, given outcome- and treatment-allowable covariates, Black patients had the same odds of receiving treatment as white patients. To estimate this via the regularized KOB approach, we again use the empirical distribution of the covariates among Black patients in place of the population distribution $P(X^y = x^y, X^w = x^w, X^n = x^n \mid G = 1)$. We then estimate $P(W = 1 \mid G = 0, X^y, X^w)$, the probability of receiving surgery for white patients given the allowable covariates, by fitting a logistic regression on a set of flexible features using the observed data for white patients. Combined, this gives us an estimate of the target distribution. We note that, unlike for the other targets estimands, we rely on parametric modeling to estimate key components of intervention-style estimands like this.

For the unconditional decomposition of \citet{Yu2023_decomp}, we want to estimate $\mu_g^{wg}$, the expected outcome for group $g$ if everyone were to receive surgery decision $w$. Here we use the regularized KOB approach, re-weighting the observed patients of group $g$ with surgery decision $w$ towards the empirical distribution of the observed covariates for all patients of group $g$. We also estimate $p_g$, the proportion of patients in group $g$ receiving surgery, with the empirical average.

\paragraph{Variance estimation.}
Next we estimate the variance of these estimates. For $\hat{\mu}_1^1$, this is a simple average of observed outcomes, so we estimate the variance as
$
\hat{V}_1^1 = \frac{1}{n_1^2}\sum_{G_i = 1} \left(Y_i - \hat{\mu}_1^1\right)^2.
$
To estimate the variance of the other estimates, note that the variance term in Equation \eqref{eq:mse} involves the conditional variance of the outcome given the appropriate set of covariates. For each estimate we begin by fitting a model for the conditional expected outcome given the selected covariates $V$ and race $G$, $\hat{m}(v, g)$ and get the predicted value for each unit $\hat{m}_i = \hat{m}(V_i, G_i)$. We then estimate the variance as 
\[
\hat{V}\left(\hat{\mu}_g^{\mathcal{P}_v^\ast}\right) = \frac{1}{\left(\sum_{G_i = g} \hat{\gamma}_i\right)^{2}}\sum_{G_i = g} \hat{\gamma}_i^2 \left(Y_i - \hat{m}_i\right)^2 = \sum_{G_i = g} \hat{\gamma}_i^2 \left(Y_i - \hat{m}_i\right)^2,
\]
where we have used the constraint in Equation \eqref{eq:opt_prob} that $\sum_{G_i = g} \gamma_i = 1$. We term this the residualized variance estimator (RVE). In our main results below, we fit the model via ridge regression, fully interacting the hospital indicators $H_i$ and the surgery decision $W_i$, when appropriate, with the transformed covariates $\phi(Z_i)$ and weighting according to $\hat{\gamma}$. \citet{hirshberg2019minimax} and \citet{benmichael2021_review} provide technical conditions for $\hat{\mu}_g^{\mathcal{P}_v^\ast}$ to be asymptotically normally distributed around $\mu_g^{\mathcal{P}_v^\ast}$, and for $\hat{V}\left(\hat{\mu}_g^{\mathcal{P}_v^\ast}\right)$ to be consistent for the true variance. Notably, the bias term due to any remaining differences in the distributions between the adjusted $G = g$ population and the target distribution must be small. We explore the impact of this bias on coverage of confidence intervals in Section \ref{sec:sim} below.

Finally, note that when we fit the conditional expectation with a constant model, $\hat{V}\left(\hat{\mu}_g^{\mathcal{P}_v^\ast}\right)$ is equivalent to the HC2 standard error estimate for the coefficient on race $G = g$, in a weighted least squares regression of the outcome on an intercept and race. This form of variance estimate is common for propensity score estimators \citep{ramsey2019using,pirracchio2016propensity}, though \citet{reifeis2022variance} suggest that the resulting variance estimator can perform poorly in practice. We assess this via simulation in Appendix~\ref{sec:coverage_sim} and find that the HC2 standard error is conservative in our simulation settings.

With these variance estimates in hand, we can construct approximate $1-\alpha$ confidence intervals for our racial disparity measures.
For example, for  the adjusted risk difference when standardizing covariates to the Black patient distribution, we estimate the squared standard error as $\hat{V}_\text{diff} = \hat{V}_1^1 + \hat{V}\left(\hat{\mu}_0^{1_v}\right)$ and create an approximate confidence interval via $\hat{\mu}_1^1 - \hat{\mu}_0^{1_v} \pm z_{1-\alpha/2}\sqrt{V_\text{diff}}$, where $z_{1-\alpha/2}$ is the $1-\alpha/2$ quantile of a standard normal variable. We also create a confidence interval for the log adjusted risk ratio via the delta method, estimating an overall squared standard via $\hat{V}_\text{risk} = \left(\frac{1}{\hat{\mu}_1^1}\right)^2 \hat{V}_1^1 + \left(\frac{1}{\hat{\mu}_0^{1_v}}\right)^2\hat{V}\left(\hat{\mu}_0^{1_v}\right)$ and computing the confidence interval as $\log\hat{\mu}_1^1 - \log \hat{\mu}_0^{1_v} \pm z_{1-\alpha/2}\sqrt{V_\text{risk}}$.
Finally, the disparity decompositions are linear combinations of the various $\hat{\mu}_g^{P^\ast_v}$ estimates, so we can compute their standard errors analogously.\footnote{To compute the standard error for the components of the unconditional decomposition, we treat $p_g$, the proportion of patients in group $g$, as known given our large sample sizes. See \citet{Yu2023_decomp} for a form of the standard error that incorporates uncertainty in $p_g$.}

\section{Racial Disparities in Emergency General Surgery}
\label{sec:results}

We now turn to our analysis of racial disparities in EGS.\footnote{In the supplementary materials, we conduct two different simulation studies to evaluate our proposed methods and better explore their properties.}
First, we discuss various data processing steps, including expanding the basis to account for non-linearity and interactions. We estimate two different sets of weights: (1) adjusting for age and sex only, and (2) adjusting for the full set of available patient-level covariates. In addition, we estimate one set of weights that ignores hospital information and one that stratifies by hospitals. This allows us to account for the role of hospital quality, which has been shown in past research to be an important factor in racial disparities in surgery. We then review how well the weights reduce imbalances between the comparison groups and  explore how constraining the weights to be non-negative prevents extrapolation. After these diagnostic checks, we estimate outcome disparities across the different adjustment sets, finding that disparities are largely explained by differences between hospitals. Here, we provide a direct comparison to widely used methods based on matching \citep{silber2013characteristics,silber2014racial,silber2015examining,rosenbaum2013using}. Finally, we implement an analysis for both of the disparity decompositions outlined in Section~\ref{sec:decompose}. In this analysis, we implement both the stochastic intervention decomposition and the unconditional decomposition to understand whether the decision to operate by race also contributes to the observed disparity.

\subsection{Data Processing}

We start our analysis by focusing on the full set of available covariates and hospital indicators. In Section \ref{sec:allowable_decomp} below, we partition variables into allowable and non-allowable sets, following the setup in Section \ref{sec:prelim}.

\paragraph{Selecting the features to balance.}
\label{sec:basis}
Under the regularized KOB approach, we can account for possible nonlinearity or key interactions between covariates by expanding the set of features as encoded by $\phi(\cdot)$. To account for possible nonlinearities, a spline basis can be applied to each continuous covariate. In our study, age is the most important continuous covariate, and we include a spline basis for age in $\phi(\cdot)$. For binary covariates, the primary concern is identifying which relevant interactions between covariates to consider \citep[see][for more discussion on the role of interactions]{benmichael2021_drp}. In principle, expanding the basis to include interactions can be based on substantive expertise. In practice, however, substantive expertise may be silent when a large number of interactions are possible: in our data set, there are 4,186 potential two-way interactions alone. 

As such, we use a random forest-based approach for selecting interactions using a method outlined in \citet{inglis2022visualizing}. See \citet{wang2021flame} for a more general discussion of learning appropriate bases. Among machine learning methods, random forests are especially well suited for selecting interactions \citep{wright2016little}. As such, we use an ``honest'' implementation of random forests based on sample splitting. We first split the data into a training and analysis sample. For this step, we randomly sample 2.5\% of the patients as the training sample. Using the training sample, we train a random forest to predict the outcome, select the set of interactions with the largest variable importance metrics, and then include these interactions in our transformation function $\phi(\cdot)$. We then discard the training sample from further analysis. We used separate random forest fits for the different covariate sets; we first focus on the full set of covariates, and then consider outcome-allowable and treatment-allowable sets in Section \ref{sec:allowable_decomp}.

\paragraph{Hyperparameter Selection.}
Finally, we select the hyperparameter $\lambda$, which controls the bias-variance tradeoff using the method outlined in Section \ref{sec:balancing}. We choose separate hyperparameters by surgery status using the full set of covariates and for age and sex alone; the hyperparameter values range from 0.13 to 0.19. Notably these hyperparameter values are non-zero but still relatively small, so these values largely prioritize bias reduction over increasing the effective sample size. In the supplemental materials, we include an additional analysis where we compute the imbalance reduction and the effective sample sizes for a range of hyperparameter values. In that analysis, we demonstrate the clear bias-variance trade off controlled by the hyperparameter value.

\subsection{Diagnostics}

\paragraph{Balance.}
One important diagnostic for a weighting analysis is understanding how well the weights account for imbalances across the comparison groups. Here, we report on balance diagnostics to understand how well restricted weights (i.e. $L = 0, U = \infty$) balance the baseline differences between the white and Black patient populations. In the supplementary materials, we report balance results for the weights that do not include hospitals. For these analyses, we find that we can nearly exactly balance patient-level covariates. Here, we focus on the analysis that includes the full set of covariates $X$ and the hospital indicators $H$. 

To measure balance within hospitals, we calculate two vectors of standardized differences within each hospital before and after weighting, which we denote, respectively, as $\hat{\Delta}_{huw}$ and $\hat{\Delta}_{hw}$. For $K$ covariates with $H$ hospitals, the dimension of these vectors of standardized differences is now $K \times H = J$. We then calculate a measure we call the root-mean squared imbalance (RMSI) $= \left[\sum_h \hat{\Delta}_{hw}^2 \right]^{1/2}$. Importantly, RMSI avoids ``averaging away'' imbalances across hospitals. In Figure~\ref{fig:bal.plot}, we plot the RMSI for the 15 covariates with the largest imbalances at baseline. We observe, as anticipated, that the weighting estimator reduces these imbalances, often substantially, though some interactions remain difficult to balance.

\begin{figure}[htbp]
  \centering
    \includegraphics[scale=0.6]{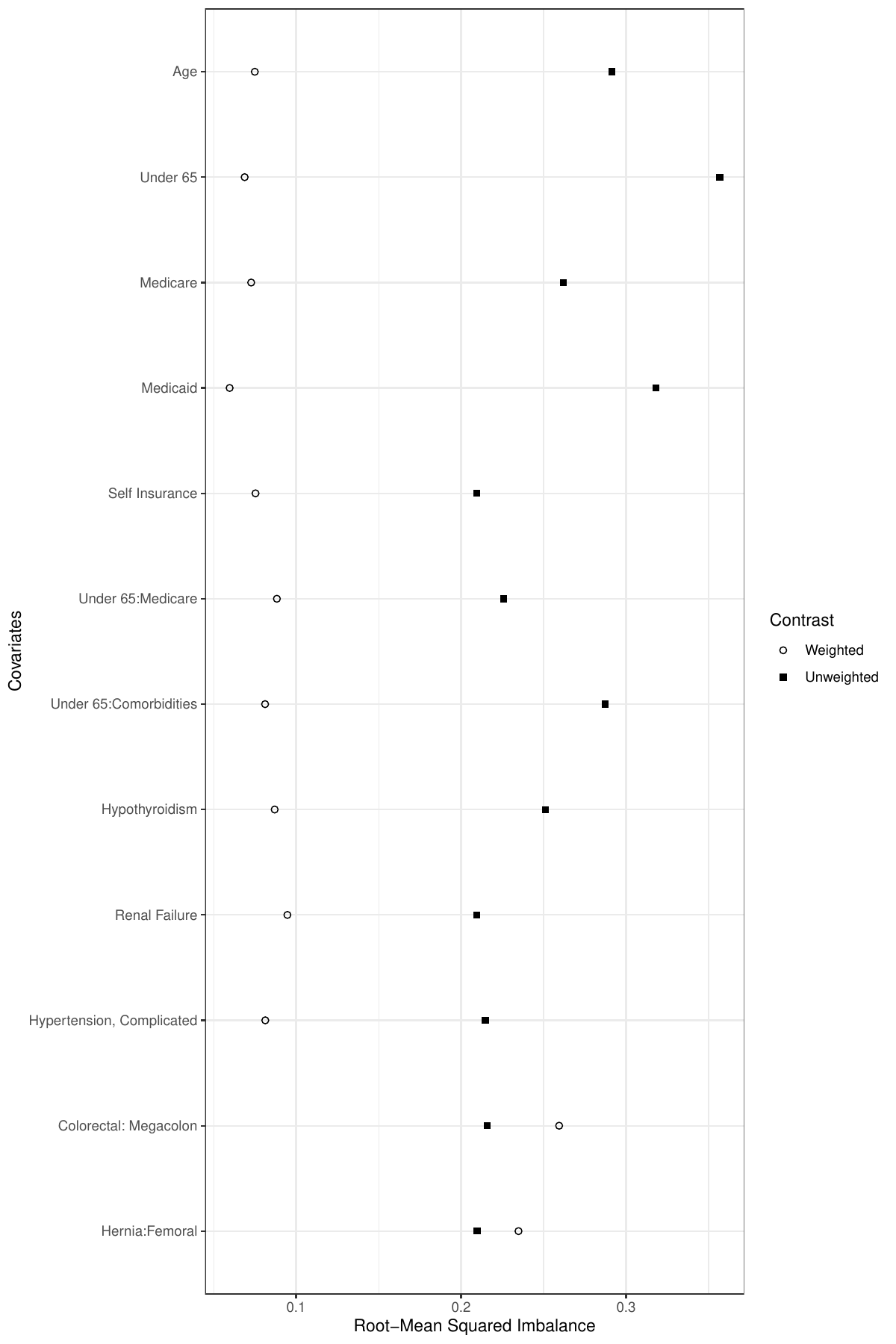}
     \caption{RMSI before and after weighting for the covariates with the largest baseline imbalances for the within hospital analysis.}
  \label{fig:bal.plot}
\end{figure}

\paragraph{Extrapolation.} 

We also investigate how constraining the weights to be non-negative prevents extrapolation. To that end, we estimated an additional set of weights using an unrestricted model where $L = -\infty$ and $U = \infty$ for the within-hospital analysis using the full set of covariates; this is equivalent to estimating a ridge regression outcome model alone. The unrestricted weights are allowed to be negative, which extrapolates beyond the support of the data. To measure the amount of ``effective extrapolation,'' we sum over absolute values of the negative weights and divide that sum by the number of Black patients. By this measure, 15.2\% of weights are negative overall. We also calculated this measure of effective extrapolation for each hospital. Figure~\ref{fig:hist2} contains a histogram of the effective extrapolation measure separately by hospital. The mass of the distribution is to the left of 50\%, which indicates that for most hospitals, the effective extrapolation is less than half. However, there is a substantial fraction of hospitals with more substantial extrapolation. In fact, there are several hospitals where the extrapolation percentage is greater than 100\%. Overall, we find that, for many hospitals, considerable extrapolation is necessary for balancing covariates with unrestricted weights --- and an extreme amount of extrapolation is necessary for a few. 

\begin{figure}[tb]
\centering
\begin{subfigure}[t]{0.48\textwidth}
  \centering
    \includegraphics[width=0.95\textwidth]{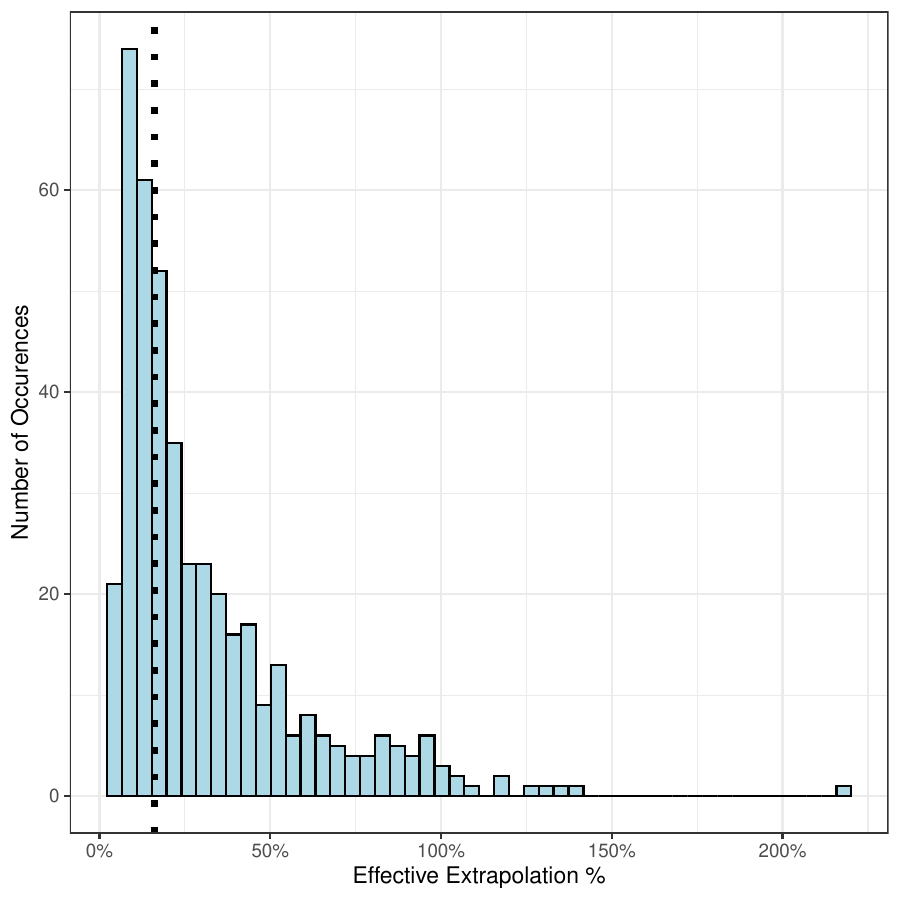}
       \caption{Histogram of the percentage of negative weights by hospital. Dotted line represents extrapolation percentage for entire sample: 15\%}
  \label{fig:hist2}
\end{subfigure}%
\quad\begin{subfigure}[t]{0.48\textwidth}
    \centering
    \includegraphics[width=0.95\textwidth]{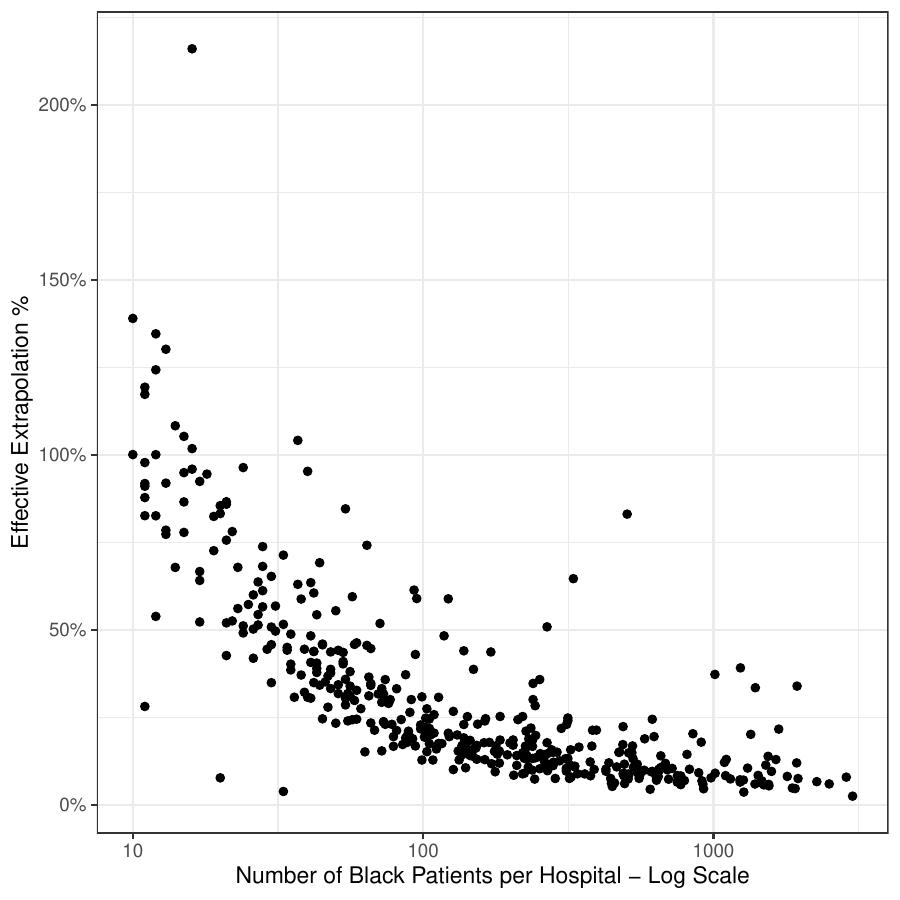}
      \caption{Effective extrapolation percentage plotted against number of Black patients on a log scale.}
  \label{fig:scat}
\end{subfigure}
\end{figure}

To help understand the source of the extrapolation, Figure~\ref{fig:scat} plots the hospital-specific extrapolation percentage against the number of Black patients in each hospital. In this figure, we observe that high levels of extrapolation occur in hospitals with few Black patients. This is not surprising: when there are few Black patients the weights tend to extrapolate to balance the covariate distributions. As such, the restricted version of the weights is useful to prevent extrapolation beyond the support of the data, at the cost of reduced balance. 

\subsection{Disparity estimates with the full adjustment set}

We now present the disparity estimates. We first focus on the full EGS patient population. Table~\ref{tab1} contains disparity estimates for four different scenarios. Without any adjustment, the observed adverse event rate for Black patients is 27\%, and the estimated adverse event rate for white patients is 29.1\%, for a risk difference of -2.1 percentage points and a risk ratio that is 7\% lower. In addition, there is no difference in terms of length of stay.

\begin{table}[htb]
\centering
\begin{threeparttable}
\caption{Estimated racial disparities for outcomes for the full EGS population, with different adjustment sets}
\label{tab1}
\begin{tabular}{lcccc}
\toprule
& Unadjusted & Age- \& sex-adjusted & Fully-adjusted & Fully-adjusted disparity \\
 &            & disparity$^*$ & disparity  & within hospital \\
\midrule
 \multirow{ 2}{2.5cm}{Adverse Event}  & -0.021 & 0.017 & 0.011 & 0.004  \\ 
 & [ -0.024 , -0.018 ] & [ 0.011 , 0.023 ] & [ 0.007 , 0.015 ] & [ -0.003 , 0.011 ]  \\    
\midrule
\multirow{ 2}{2.5cm}{Adverse Event\\(Risk Ratio)}  & 0.93 & 1.07 & 1.04 & 1.01  \\ 
 & [ 0.92 , 0.94 ] & [ 1.05 , 1.09 ] & [ 1.03 , 1.05 ] & [ 1 , 1.03 ]  \\ 
\midrule
\multirow{ 2}{2.5cm}{Length of Stay} & -0.0005 & 0.58 & 0.47 & 0.2  \\ 
  & [ -0.04 , 0.04 ] & [ 0.47 , 0.68 ] & [ 0.43 , 0.51 ] & [ 0.09 , 0.31 ]  \\ 
\bottomrule
\end{tabular}
\begin{tablenotes}[para]
{ \footnotesize Note: Numbers in brackets are 95\% confidence intervals. $^*$ the outcome-allowable covariates $X^y$.} 
\end{tablenotes}
\end{threeparttable}
\end{table}

Following the setup in Section \ref{sec:prelim}, we first consider adjusting for the outcome-allowable covariates --- age and sex --- alone. We now find that the re-weighted adverse event rate for white patients is 25\% for a risk difference of 1.7 percentage points. We also find that Black patients have an average length of stay that is more than half a day longer than white patients. As such, once we compare patients with similar age and sex profiles, we find a clear racial disparity for the EGS population. 

Next, we present results when we include the full set of patient-level covariates. Conditioning on these additional covariates shrinks the estimated disparity by a small amount, but the estimated disparity is similar. We then account for hospitals using the within-hospital weights. In the within-hospital analysis, the estimated adverse event rate for the re-weighted white population is now 26.6\% for a risk difference of 0.4 percentage points; the risk ratio is now 1\% higher. The average length of stay for the re-weighted white population is 5.9 days for a difference of 0.20 days. Overall, accounting for differences across hospitals nearly eliminates the estimated disparities in adverse events and length of stay. These results are consistent with those in \citet{silber2014racial} and demonstrate both the need to ensure that Black and white patients are comparable before estimating disparities, and how the results depend on which covariates are considered allowable. 

Table~\ref{tab2} contains analogous results for the subset of the EGS population that received surgical care. The pattern of results is roughly the same as in the full EGS population. In the unadjusted analysis, we find that Black patients either have very similar or lower rates of adverse events. However, once we adjust for the full set of observed covariates, Black patients have worse outcomes: the risk of an adverse event is 6\% higher for Black patients, and the average length of stay is 0.62 days longer. As before, the disparity shrinks in the within-hospital analysis: the risk of an adverse event is 1\% higher for Black patients, and the average length of stay is 0.25 days longer for Black patients.

\begin{table}[htb]
\centering
\begin{threeparttable}
\caption{Estimated Racial Disparities for Outcomes Following Surgery for an EGS Condition}
\label{tab2}
\begin{tabular}{lcccc}
\toprule
& Unadjusted  & $X\text{-adjusted}$ & $X\text{-adjusted disparity}$ \\
 &             & disparity  & within hospital \\
\midrule
 \multirow{ 2}{3.5cm}{Adverse Event} & -0.002 & 0.02 & 0.01  \\
 & [ -0.01 , 0.001 ] & [ 0.01 , 0.02 ] & [ 0.--1 , 0.02 ]  \\   
\midrule
\multirow{ 2}{3.5cm}{Adverse Event\\(Risk Ratio)} & 0.99 & 1.06 & 1.02  \\  
& [ 0.98 , 1.01 ] & [ 1.04 , 1.07 ] & [ 1 , 1.05 ]  \\
\midrule
\multirow{ 2}{3.5cm}{Length of Stay} & 0.84 & 0.62 & 0.25  \\ 
 & [ 0.75 , 0.94 ] & [ 0.56 , 0.68 ] & [ 0.13 , 0.37 ]  \\  
 \bottomrule
\end{tabular}
\begin{tablenotes}[para]
{ \footnotesize Note: Numbers in brackets are 95\% confidence intervals.} 
\end{tablenotes}
\end{threeparttable}
\end{table}

These results highlight the critical role of the choice of the covariate adjustment set. Without adjustment, we find Black patients have better or comparable outcomes, but this comparison does not reflect important differences in baseline demographics and other attributes.
If we adjust for age and sex, we find clear evidence of racial disparities in EGS. Under the more standard KOB approach that re-weights all available covariates, the estimated disparity remains larger than the unadjusted estimates. Finally, adjustment for hospitals eliminates or substantially reduces the disparity.

In short, Black EGS patients experience worse health outcomes than comparable white EGS patients \emph{across} hospitals. Compared to white patients \emph{within} the same hospital that are comparable on observable features, however, health outcomes for Black patients are not substantially worse. Below, we further consider possible disparities in the surgery decision, which also varies across hospitals.

\subsubsection{Comparison to Matching}
Matching methods are widely used to estimate racial disparities \citep{silber2013characteristics,silber2014racial,silber2015examining,rosenbaum2013using}. 
To better understand differences and similarities between matching and our approach, we re-analyzed the same data set using matching. In particular, we implemented an optimal match with refined-covariate (RC) balance constraints and a propensity score caliper, while exactly matching on hospitals and stratifying the match by surgical status \citep{Pimentel:2015a}. We then calculated disparities using the matched data set. We find that the estimated risk difference for adverse events is 2.5 percentage points, and the average difference in length of stay is 0.51 days. That is, for the matched analysis, we find that the estimated disparities are close to the balancing weight estimates that do not control for hospitals.

One possible explanation for the differences is that while the exact match on hospitals accounts for across-hospital differences, matching leaves large imbalances in other covariates. To that end, we computed the percentage of imbalance reduction across the full set of covariates within each hospital compared to the baseline level of imbalance for matching and weighting. Here, we find that  weighting substantially outperforms matching. For matching, the percentage of bias reduction is 19.7\%; for weighting, the  reduction is 60.2\%. We also calculated the RMSI for matching and plotted it against the RMSI for the restricted weights. The results are in Figure~\ref{fig:match-bal}. We find that the RSMSI for matching is systematically higher than for weighting, which implies that the balance under matching is generally worse. Figure~\ref{fig:match-out} plots the difference between the estimated difference in adverse events for weighting and matching at the hospital level against the RMSI for matching. Here, we find a strong correlation between larger differences in outcomes and larger imbalances after matching. That is, the results based on matching appear to be driven by the fact that matching is unable to balance covariates within hospitals.

\begin{figure}[tbp]
  \centering
  \begin{subfigure}[t]{0.4\textwidth}
    \includegraphics[width=\textwidth]{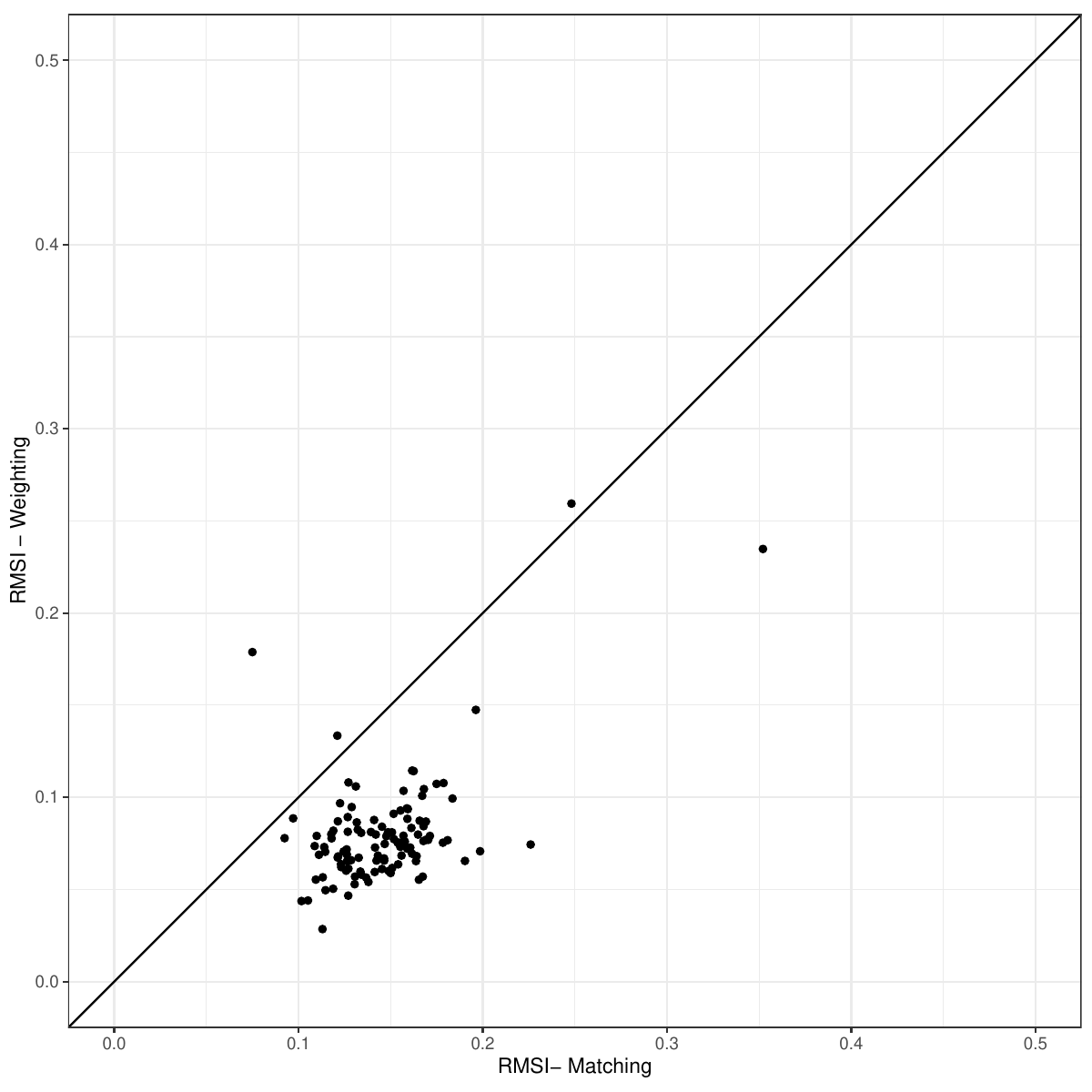}
    \caption{Balance Comparison}
    \label{fig:match-bal}
  \end{subfigure}
  \begin{subfigure}[t]{0.4\textwidth}
    \includegraphics[width=\textwidth]{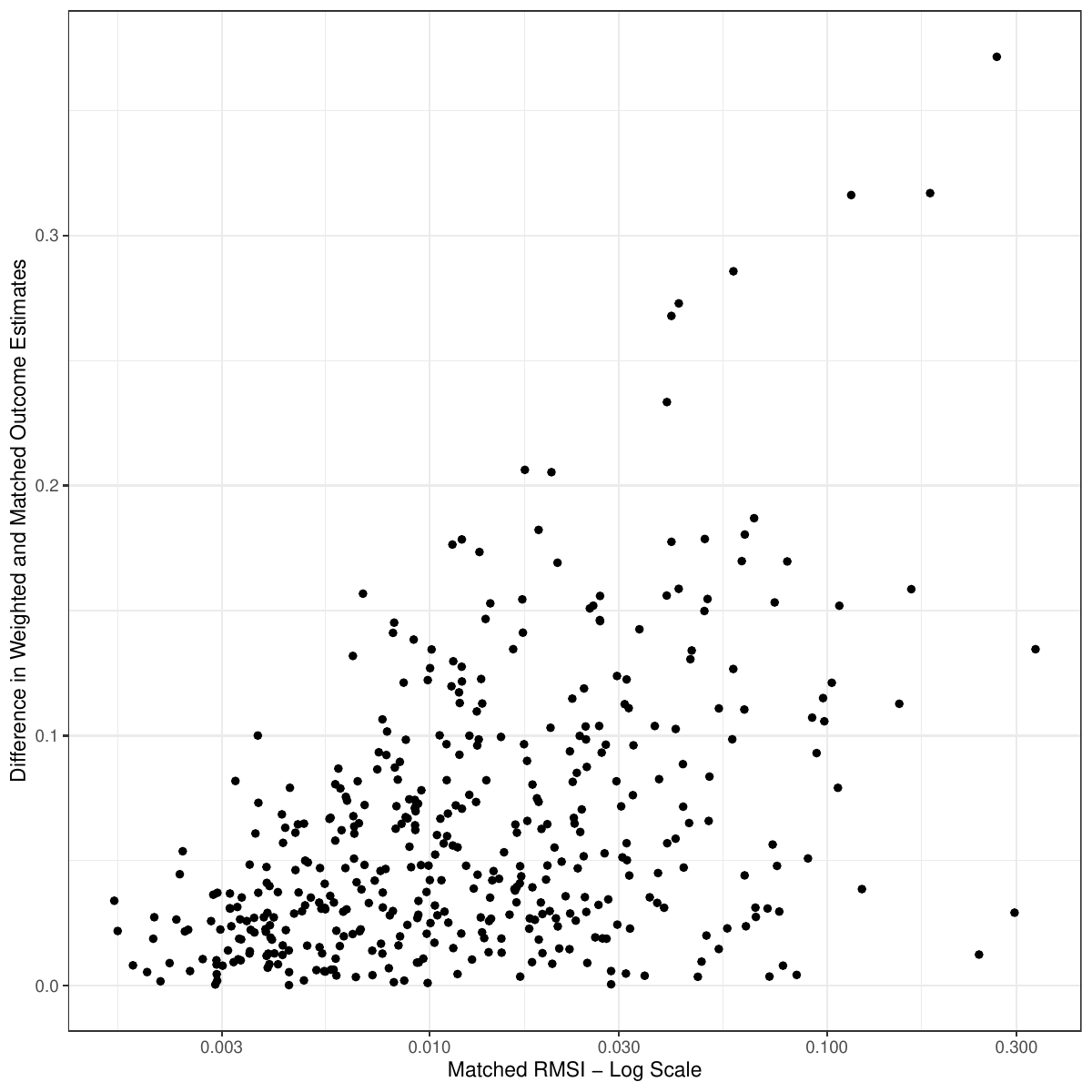}
    \caption{Estimate difference versus balance}
    \label{fig:match-out}
  \end{subfigure}
  \caption{Comparisons between results from restricted weights and results from matching.}
  \label{fig:match-comp}
\end{figure}

Finally, we compared the methods in terms of computation time, using a Macintosh desktop with 64 GB of RAM. Here, we only compared the times for the within-hospital analysis. The RC balance match required 190 minutes. Estimation of the  weights for the within hospital analysis required 7 minutes. We also attempted to include a match with exact matching on states. However, we found these matches took longer than 24 hours and thus did not let them run to completion. Thus, balancing weights require much lower computation times than extant matching methods.

\subsection{Disparity Decompositions}
\label{sec:allowable_decomp}
\label{sec:covs}

Finally, we decompose the estimated disparities into terms that are a function of the decision to treat via surgery, as outlined in Section~\ref{sec:decompose}. We begin the analysis by simply estimating whether Black patients are more or less likely to receive operative care for an EGS condition. A logistic regression of the surgery indicator on the indicator for race and the full set of patient-level covariates shows that the odds that a Black patient receives surgery relative to a white patient with the same covariates are notably lower (0.83, 95\% CI: 0.81, 0.84 ).  

First, we follow the rubric in \citet{jackson2020meaningful}. For this decomposition, we need to partition the covariates into the appropriate sets. We designate the \emph{outcome-allowable} covariates, $X^y$, as age and sex; and designate the \emph{treatment-allowable} covariates, $X^w$, as: the number of comorbidities, sepsis, disability, the 31 comorbidities, and the 51 EGS categories. Finally, we designate indicators for health insurance status and the hospital identifier as the \emph{non-allowable} covariates, $X^n$. Next, we decompose the $X^y$-adjusted disparity into the disparity reduction component and the residual disparity component.  As discussed in Section~\ref{sec:estimation}, we estimate a logistic regression model of the indicator for surgery on the outcome- and treatment-allowable covariates, $X^y$ and $X^w$, in the white patient population. We then use these estimates to generate an estimated target distribution to weight the Black patient population towards to equalize the odds of being treated via surgery. Table~\ref{tab3} contains the results, re-reporting the estimated disparity after adjusting for outcome-allowable covariates (age and sex) only.  

For both outcomes, the estimated disparity reduction from equalizing the odds of surgery between Black and white patients is close to zero, which means that the estimated residual disparity is essentially identical to the observed disparity. This suggests that the disparities we observe are neither exacerbated nor ameliorated by the decision to operate. The source of the disparity does not appear to be a function of racial bias in the decision to operate on EGS patients.

\begin{table}[h]
\centering
\begin{threeparttable}
\caption{Estimated Racial Disparities for Outcomes Following Surgery for an EGS Condition -- Stochastic Intervention Decomposition }
\label{tab3}
\begin{tabular}{lcccc}
\toprule
&  $X^y$ disparity & Disparity reduction & Residual disparity \\
\midrule
 \multirow{ 2}{3.5cm}{Adverse Event}   & 0.017 & -0.0014 & 0.0187  \\ 
 & [ 0.011 , 0.023 ] & [ -0.006 , 0.003 ] & [ 0.013 , 0.024 ]  \\   
\midrule
\multirow{ 2}{3.5cm}{Length of Stay}   & 0.58 & -0.008 & 0.58  \\ 
& [ 0.47 , 0.68 ] & [ -0.08 , 0.06 ] & [ 0.52 , 0.64 ]  \\ 

\bottomrule
\end{tabular}
\begin{tablenotes}[para]
{ \footnotesize Note: Numbers in brackets are 95\% confidence intervals. Estimates in the Table are risk differences. } 
\end{tablenotes}
\end{threeparttable}
\end{table}

Next, we follow the unconditional decomposition of  \citet{Yu2023_decomp}. As we outlined above, we decompose the total disparity into four different terms: base, prevalence, effect, and selection. Notably, this decomposition is unconditional in that we do not adjust for the patient level covariates; they are only used to equalize the odds of surgery within groups and hospitals.\footnote{For this analysis, we had to restrict the study sample to hospitals with at least 15 Black patients; in some small hospitals, all Black patients received surgery. As such, there are slight differences between the unadjusted results presented here and those presented above. Table~\ref{tab4} contains the results.} For both outcomes, we find that the total and base effects are nearly identical, since the estimated prevalence, effect, and selection effects are all small in magnitude. As such, these results mirror the conclusions from  the prior decomposition: the decision to operate and the effects of surgical care do not appear to contribute to racial disparities in EGS outcomes.

\begin{table}[h]
\centering
\begin{threeparttable}
\caption{Estimated Racial Disparities for Outcomes Following Surgery for an EGS Condition -- Unconditional Decomposition }
\label{tab4}
\begin{tabular}{lccccc}
\toprule
&  Total & Base & Prevalence & Effect & Selection \\
\midrule
 \multirow{ 2}{1.5cm}{Adverse Event}  & -0.022 & -0.029 & 0.00019 & 0.006 & 0.00071  \\ 
 & [ -0.023 , -0.021 ] & [ -0.035 , -0.023 ] & [ 0 , 0.00037 ] & [ 0.002 , 0.01 ] & [ -0.003 , 0.004 ]  \\  
\midrule
\multirow{ 2}{1.5cm}{Length of Stay}   & -0.009 & -0.007 & 0.001 & 0.013 & -0.016  \\ 
 & [ -0.025 , 0.006 ] & [ -0.086 , 0.071 ] & [ -0.001 , 0.003 ] & [ -0.039 , 0.066 ] & [ -0.0321 , 0.0001 ]  \\
\bottomrule
\end{tabular}
\begin{tablenotes}[para]
{ \footnotesize Note: Numbers in brackets are 95\% confidence intervals. Estimates in the Table are risk differences. } 
\end{tablenotes}
\end{threeparttable}
\end{table}

\section{Conclusion}
\label{sec:conc}

In the United States, racial disparities are widespread across a broad range of health outcomes. In this paper, we focus on disparities in outcomes for emergency general surgery patients. Analyzing a database of medical claims in three large US states, we find that in the unadjusted data, Black patients do not exhibit worse outcomes than white patients.  However, when we adjusted for either age and sex alone or the full set of patient covariates, we find clear evidence of race-related disparities. Additionally adjusting for admitting hospital largely eliminates disparities in adverse events, though Black patients still have longer post-treatment hospital stays than re-weighted white patients. Using a decomposition analysis, we find that the disparity does not appear to be driven by differences in the decision to operate on Black or white EGS patients.

Consistent with prior studies, we find that patient characteristics explain a large portion of the observed disparities in surgical outcomes between Black and white patients \citep{silber2015examining,tsai2014disparities}. Further, our results suggest that variation across hospitals, including quality of care, is an independent, important driver of disparities in these health outcomes. The finding that hospital quality is a driver of surgical disparities is not new, but further emphasizes that interventions targeted at hospital quality may be critical for further reducing racial disparities in surgical outcomes, and points to larger systemic factors that may be driving disparities. Historically few interventions have addressed variation in hospital quality especially among ``minority-serving'' institutions \citep{khera2015racial,tsai2014disparities,dimick2013black}. Several barriers to addressing the role of hospitals in surgical disparities exist including structural issues related to insurance coverage and resource allocation. 

Methodologically, we offer a unifying framework for linear estimators that adjust for observable differences between groups with the goal of understanding outcome disparities through decompositions \citep{strittmatter2021gender, jackson2022observational, Yu2023_decomp}. In our application, we find that two special cases of linear estimators, regression and matching, perform poorly: regression allows for extreme extrapolation for many hospitals and matching fails to adequately balance patient-level characteristics. The restricted weighting approach, by contrast, gives researchers direct control over the bias-variance trade-off and over extrapolation. 

There are several avenues for future developments. First, how best to balance high-dimensional features remains an open question. Here we use sample splitting and random forests to learn a basis to balance. While reasonable in this setting, several recent papers offer flexible alternatives, including via other machine learning methods \citep{wang2021flame}. Second, we have focused exclusively on linear weighting estimators. A natural extension is to instead consider so-called \emph{augmented} weighting estimators, which combine outcome modeling and weighting \citep[see][]{benmichael2021_review}. These approaches can adjust for imbalances that remain after weighting or matching alone, albeit at the cost of additional modeling assumptions and possible extrapolation. Finally, our analysis is in the spirit of Kitagawa-Oaxaca-Blinder decompositions and other adjustment methods that adjust for observable baseline differences between groups. Even setting aside fundamental questions about defining racial disparities \citep{jackson2022observational}, these estimated disparities might reflect important differences in which patients are admitted to different hospitals \citep{hull2018estimating}. While patients will have less choice in admitting hospital for emergency general surgery than for non-emergency general surgery, properly accounting for (observed or unobserved) differences in what types of patients are admitted to different hospitals is an important next step in further understanding health disparities. 

\clearpage

\bibliographystyle{chicago}
\bibliography{references}

\end{document}


\noindent%
	{\bfseries\large
Supplement to ``Measuring Racial Disparities in Emergency General Surgery''}

\noindent%
	
	\noindent%

\pagenumbering{arabic}
\setcounter{equation}{0}
\setcounter{table}{0}
\setcounter{figure}{0}
\renewcommand{\theequation}{S\arabic{equation}}
\renewcommand{\thetable}{S\arabic{table}}
\renewcommand{\thefigure}{S\arabic{figure}}

\section{Proofs and derivations}
\label{sec:proofs}

Key proofs and derivations from the paper are included below.

\begin{proof}[Proof of Proposition~\ref{prop:ob_mse}]

First note that the intercept $\hat{\alpha}_g$ and the coefficients $\hat{\beta}_g$ are given by
\[
\hat{\alpha}_g = \bar{Y}_g \;\;\; \hat{\beta}_g = \Sigma_g^{-1}\sum_{G_i = g} X_i (Y_i - \bar{Y}_g),
\]
where $\bar{Y}_g = \frac{1}{n_g}\sum_{G_i = g}Y_i$. Now consider the noiseless least squares problem
\[
\min_{\alpha, \beta} \frac{1}{n_g}\sum_{G_i = g}\left(m(X_i, g) - \alpha - \beta \cdot X_i\right)^2,
\]
with solutions
\[
\tilde{\alpha}_g = \bar{m}_g \;\;\; \tilde{\beta}_g = \Sigma_g^{-1} \rho_g.
\]

Now we can decompose the estimation error into
\[
\begin{aligned}
  \hat{\mu}_g^{\mathcal{P}^\ast OB} - \mu_g^{\mathcal{P}^\ast} & = \hat{\alpha}_g - \tilde{\alpha}_g + \left(\hat{\beta}_g - \tilde{\beta}_g \right) \cdot \E_{\mathcal{P}^\ast}[X] + \tilde{\alpha}_g + \tilde{\beta}_g  \cdot \E_{\mathcal{P}^\ast}[X] - \E_{\mathcal{P}^\ast}[m(X, g)]\\
  & = \bar{Y}_g - \bar{m}_g + \Sigma_g^{-1} \sum_{G_i = g} X_i \left(Y_i - \bar{Y}_g - m(X_i, g) - \bar{m}_g\right) \cdot \E_{\mathcal{P}^\ast}[X] + \bar{m}_g + \Sigma_g^{-1}\rho_g - \E_{\mathcal{P}^\ast}[m(X, g)].
\end{aligned}
\]
Now the conditional bias is
\[
 \E\left[\hat{\mu}_g^{\mathcal{P}^\ast OB} - \mu_g^{\mathcal{P}^\ast} \mid X, G\right] = \Sigma_g^{-1}\rho_g - \left(\E_{\mathcal{P}^\ast}[m(X, g)] - \bar{m}_g \right),
\]
and the conditional variance is
\[
\Var\left(\hat{\mu}_g^{\mathcal{P}^\ast OB}  \mid X, G\right) = \frac{\sigma^2}{n_g} + \frac{1}{n_g^2}\sum_{G_i = g}\left(\E_{\mathcal{P}^\ast}[X] \Sigma_g^{-1}X_i\right)^2 \sigma^2.
\]
So the conditional MSE is the sum of the squared conditional bias and the conditional variance.

\end{proof}

\subsection{Equivalence of penalized linear weighting estimator and regularized OB estimator.}

\citet{benmichael2021_review} derive the general Lagrangian dual for optimization problems of the form in Equation~\ref{eq:opt_prob}. Specializing those results, the Lagrangian dual solves the unconstrained optimization problem:

\[
\min_{\eta} \frac{1}{2} \sum_{j=1}^J \left[\sum_{G_i = g, H_i = j}\left(\eta_{0j} + \phi(Z_i) \cdot \eta_j\right)^2 - P^\ast(H = j) \eta_{0j} + \E_{\mathcal{P}^\ast}[\phi(Z) \mid H = j] \cdot \eta_j \right] + \frac{\lambda}{2}\sum_j \|\eta_j - \bar{\eta}\|_2^2,
\]
where $\bar{\eta} = \frac{1}{J}\sum_{j=1}^J \eta_j$, and the corresponding weights are
\[
\gamma(X_i) = \sum_{j = 1}^J \mathbbm{1}\{H_i = j\}\left(\hat{\eta}_{0j} + \phi(Z_i) \cdot \hat{\eta}_j\right).
\]
To show the equivalence, we will work with this in matrix form. Define the matrix $\Phi \in \R^{n \times J(p+1)}$ with elements $\Phi_{i,jk} = \bbone\{G_i = g, H_i = j\}\phi_k(Z_i)$, where $\phi_0(z) = 1$; the matrix $D \in \R^{J(p+1) \times J(p+1)}$ with elements $D_{jk, j'k'} = \bbone\{j = j', k = k', k \neq 0\}$; and the matrix $A \in \R^ {J(p+1) \times J(p+1)}$ with elements $A_{jk, j'k'} = \frac{1}{J}\bbone\{k = k', k \neq 0\}$. Also define the vector $\Phi^\ast \in \R^{J(p+1)}$ with elements $\phi^\ast_{jk} = P^\ast(H = j)\E_{\mathcal{P}^\ast}[\phi_k(Z) \mid H = j]$ and $\tilde{Y} \in \R^{n}$ where $\tilde{Y}_i = \bbone\{G_i = g\}Y_i$. Then the dual can be written as
\[
\min_\eta \; \frac{1}{2}\|\Phi \eta\|_2^2 - \phi^\ast \cdot \eta + \frac{\lambda}{2} \|(D - A)\eta\|_2^2.
\]
The solution is
\[
\hat{\eta} = \left(\Phi'\Phi + \lambda (D - A)\right)^{-1}\phi^\ast,
\]
and so the linear weighting estimator is
\[
\hat{\mu}_g^{\mathcal{P}^\ast} = \hat{\gamma}'\tilde{Y} =  {\phi^\ast}'\left(\Phi'\Phi + \lambda (D - A)\right)^{-1}\Phi'\tilde{Y}.
\]
Now compare to Equation \eqref{eq:ridge_model}, reproduced here in matrix form as
\[
\min_\beta \; \|\tilde{Y} - \Phi \beta\|_2^2 + \lambda \|(D - A)\eta\|_2^2,
\]
where we have used the fact that the solution $ \hat{\mu}_{\beta_g} = \frac{1}{J}\sum_{j=1}^J \hat{\beta}_{gj}$.
The solution is
\[
\hat{\beta} = \left(\Phi'\Phi + \lambda (D - A)\right)^{-1}\Phi'\tilde{Y},
\]
and the corresponding OB estimate is
\[
\hat{\mu}_g^{\mathcal{P} \ast} = {\phi^\ast}'\hat{\beta} =  {\phi^\ast}'\left(\Phi'\Phi + \lambda (D - A)\right)^{-1}\Phi'\tilde{Y},
\]
which is equivalent to the weighting estimate.

\section{Simulation Study}
\label{sec:sim}

Here, we conduct two simulation studies. We focus on two key questions. In the first study, we seek to understand how the performance of the balancing weights changes with different hyperparameter values. In the second study, we examine the performance of standard errors based on RVE relative to HC2 standard errors.
In our DGP, we use five observed covariates that are independent draws from a multivariate standard Gaussian distribution: $X = (X_1, \dots, X_5) ^\top$. We construct a group indicator $G$ as $G = \mathbb{I}(G^* > 0)$ where $G^* = (1.5 X_1 + 1.5 X_2 + .7 X_1 X_2)/c + \text{Unif}(-0.5, 0.5)$. Next, we generate outcomes for units as:
\[
Y = 5G + X_2 + X_3 + \epsilon
\]
and $\epsilon$ is drawn from $N(0,1)$. The covariates $X_4$ and $X_5$ are additional covariates that differ between the two groups but are not related to the outcome . We use $c$ to control the overlap between the two group distributions, using values of $c = 1$ (poor overlap) and $c = 10$ (good overlap). 

In the simulations, we use three different methods of estimation for the weights. First, we use  weights from an unrestricted optimization problem where $L = -\infty$ and $U = \infty$. We refer to this set of weights as ``unrestricted''  weights, since as we noted above, these weights allow for extrapolation. Next, we constrain the weights to be non-negative (i.e. $L = 0, U = \infty$). We refer to this set of weights as ``restricted'' weights. In the simulations below, there are no group-level indicators, and so we remove the global balance constraint in Equation \eqref{eq:opt_prob} and only require approximate balance, with $\lambda$ controlling the bias-variance trade-off as before. Finally, we also implement standard IPW weights fit via logistic regression. For this DGP, the IPW estimator will be consistent when overlap holds. We used a sample size of 1000 and repeated the simulations 1000 times for each condition.

\subsection{Simulation Study 1}
\label{sec:hyper}

In the first simulation study, we examine how varying the hyper-parameter affects the results from a design-based perspective--that is, without considering outcomes. Specifically, we focus on bias as measured by balance on baseline covariates and the effective sample size. As such, in this simulation study, we focus on two different performance metrics. The first metric is the effective sample size defined as: 
\[ 
n^{\text{eff}} \;\; \equiv \;\; \left(\sum_{i=1}^n \hat{\gamma}_i\right)^2 \Bigg/ \sum_{i=1}^n \hat{\gamma}_i^2
 \]
The second metric is the percentage of bias reduction (PBR). PBR measures how much weighting reduces imbalance compared to the unweighted data. We use a PBR metric based on standardized differences. The standardized difference is a common statistic employed to measure imbalance before and after matching or weighting. For covariate $k$, the standardized difference before weighting is
\[
\hat{\Delta}_{k} = \frac{\bar{X}_{1k} - \bar{X}_{0k}}{\sqrt{(V(X_{1k}) + V(X_{0c}))/2}}
\]
\noindent where $X_{1k}$ and $X_{0k}$ are the $G=1$ and $G=0$ group vectors for covariate $k$. The standardized difference after weighting is
\[
\hat{\Delta}_{wk} = \frac{\bar{X}_{w1k} - \bar{X}_{w0k}}{\sqrt{(V(X_{1k}) + V(X_{0c}))/2}}
\]
\noindent where $\bar{X}_{w1k}$ and $\bar{X}_{w0k}$ are weighted means based on the estimated balancing weights. We use $\hat{\Delta}_{uw}$ to refer to the vector of standardized differences for all $K$ covariates in the unweighted data, and $\hat{\Delta}_{w}$ for the vector of standardized differences for all $K$ covariates in the weighted data. Using these two quantities, we then calculate PBR as:
\[ 
PBR = 100\% \times \left[ \frac{1}{K} \sum_k |\hat{\Delta}_{w}| \; \Big/ \; \frac{1}{K} \sum_k |\hat{\Delta}_{uw}| \right]  .
\]
This measure describes the reduction in bias based on the change in balance across all covariates due to weighting. In this study, we are interested in how these metrics change with the hyper-parameter. As such, for both sets of weights, we vary $\lambda$ from 0 to 150 in increments of 15. Small values of $\lambda$ should increase PBR but lower the effective sample size relative to IPW which will have fixed results.

Next, we review the results from the simulations. Figure~\ref{fig:sim1} contains the simulation results in the good overlap scenario. Here, the IPW estimator performs well as it reduces bias by just over 90\% with an effective sample size over under 800. For the restricted and unrestricted weights, the performance depends on primarily on the hyper-parameter values. When $\lambda=0$, we achieve 100\% bias reduction for both restricted and unrestricted  weights, but we pay a price in terms of effective sample size, especially for the restricted weights. This is to be expected given that when we set $\lambda=0$, we give no priority to the sample size in the objective function. For larger values of $\lambda$, the unrestricted  weights do slightly better in terms of bias reduction but produce a similar effective sample size. For mid-range values of $\lambda$, bias reduction and effective sample size are very similar to the IPW weights. When $\lambda$ is large, the effective sample size is larger, but we pay a price in terms of PBR, which drops below 80\%. In general, though, when overlap is good, one can use lower $\lambda$ values to beat the performance of the IPW estimator. 

\begin{figure}
  \centering
    \includegraphics[scale=0.6]{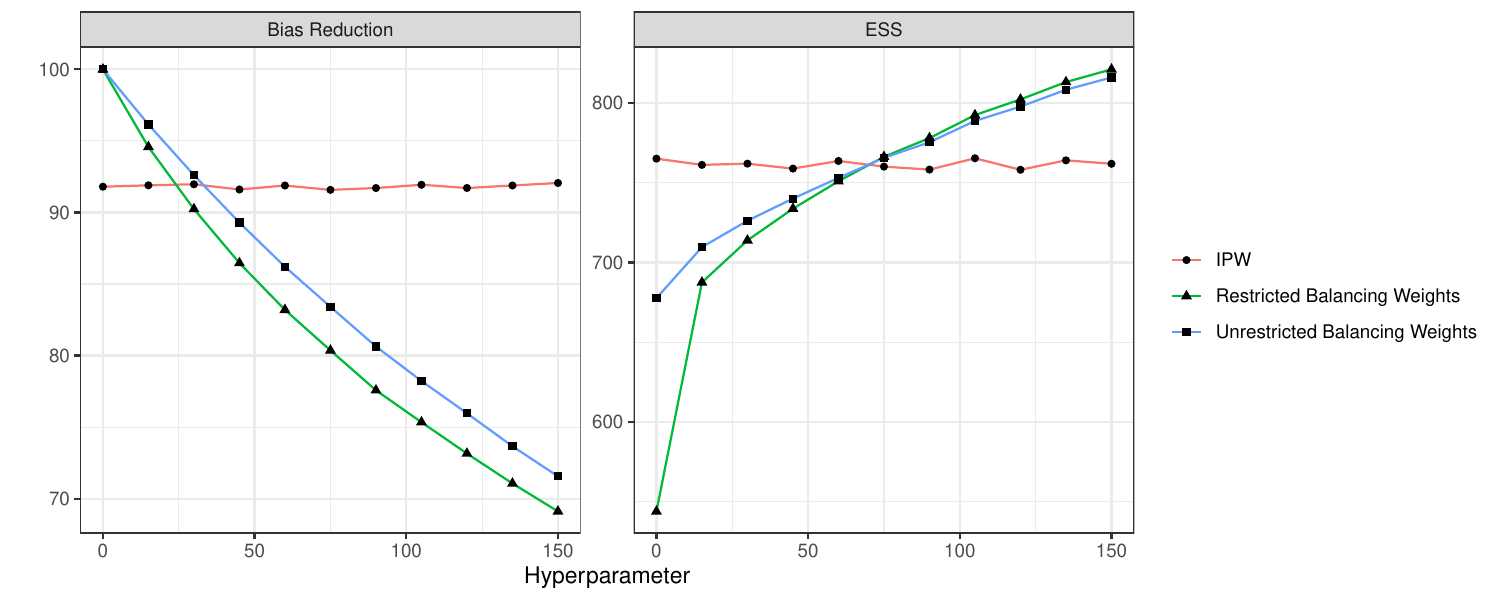}
    \caption{Bias reduction and effective samples across all simulation runs -- good overlap scenario. $X$-axis is $\lambda$ which controls the bias variance tradeoff for the balancing weights.}
  \label{fig:sim1}
\end{figure}

Figure~\ref{fig:sim2} shows the results from the poor overlap scenario. With poor overlap, IPW is no longer effective at balancing the covariates as the PBR hovers around zero. Here, the performance of the IPW estimator mirrors that found in \citet{Kang:2007}. That is, the IPW estimator essentially fails when overlap is poor. Both restricted and unrestricted weights still lead to substantial bias reduction in this scenario: unrestricted weights achieve exact balance (100\% bias reduction) while restricted weights reduce estimated bias by 80\%. Thus, when overlap is poor, allowing extrapolation in the weights allows for further bias reduction --- albeit at the price of some additional model dependence. For the restricted weights, the effective sample size here is quite small when $\lambda = 0$ but a slightly higher value of $\lambda$ produces a much larger effective sample size. The unrestricted weights also allow for a much larger effective sample size when $\lambda = 0$. In general, as $\lambda$ increases, the sample size grows and bias reduction levels off at around 50\%. 

\begin{figure}
  \centering
    \includegraphics[scale=0.6]{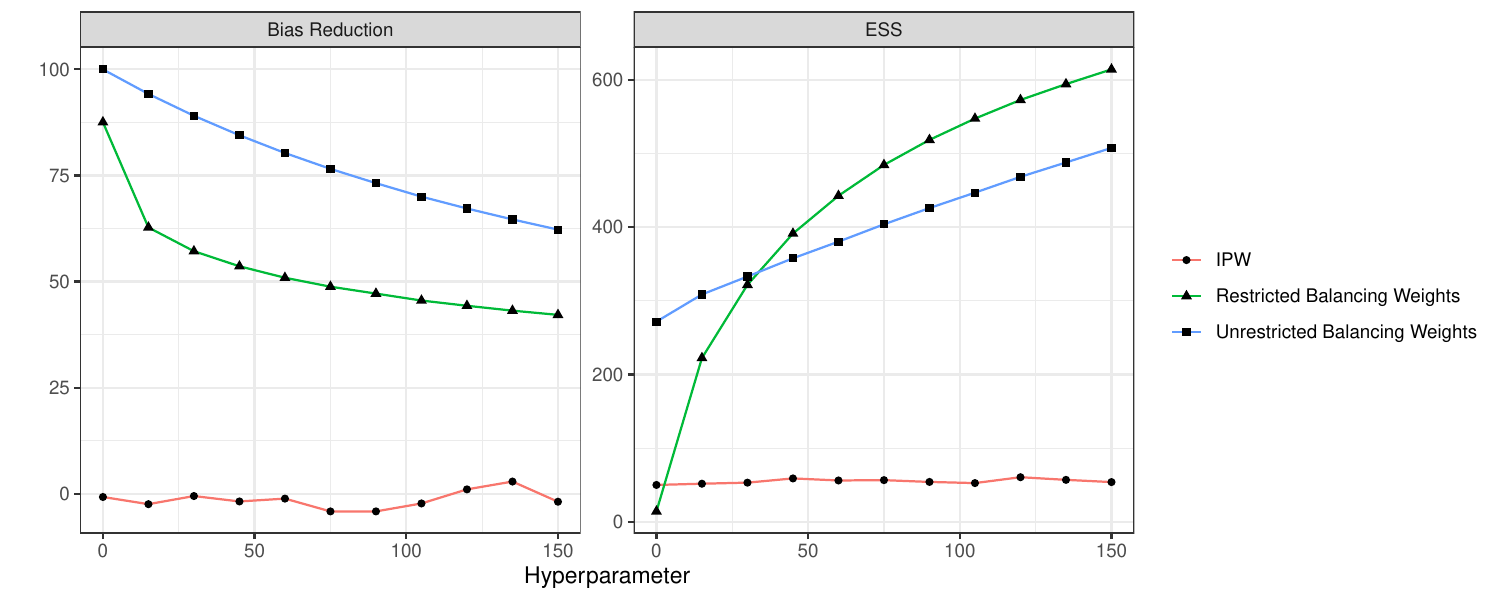}
    \caption{Bias reduction and effective samples across all simulation runs -- poor overlap scenario. $X$-axis is $\lambda$ which controls the bias variance tradeoff for the balancing weights.}
  \label{fig:sim2}
\end{figure}

These simulations reveal two clear patterns. For any weighting estimator, there is a clear bias-variance trade-off. For both overlap conditions, the hyper-parameter controls this bias-variance tradeoff to a much greater extent than the constraints we placed on the weights. When overlap is good, either optimization-based weighting approach will match the performance of the IPW estimator for the some value of $\lambda$, but provide greater flexibility. For example, when the overall sample is large, preserving sample size may be a secondary concern. In these scenarios, the analyst can opt to further reduce bias. Next, we observe how the performance of the IPW estimator is strongly dependent on the overlap condition.  When overlap is poor, the IPW estimator fails. However, for both restricted and unrestricted weights the bias-variance relationship holds, and the analyst can still achieve good performance in terms of both PBR and effective sample size.

\subsection{Simulation Study 2}
\label{sec:coverage_sim}

In the second simulation study, we focus on two issues related to statistical inference. First, we compare relative performance of standard errors based on RVE compared to HC2 standard errors. As we noted above, for the IPW estimator with ATT-like estimands, the standard errors based on weighting alone can be either too large or too small \citet{reifeis2022variance}. Second, we also study whether that is also true for either of the variance estimation methods we outlined for balancing weights. For this simulation, we measured bias, the average standard error estimate across the two methods, and the coverage rate for the 95\% confidence interval. In this simulation study, we again vary the overlap parameter and $\lambda$. Here, we vary $\lambda$ from 0 to 15 in increments of 2.5, and we set $c$ to either 1, 5, or 10. We only use restricted weights for this simulation, since the unrestricted weights will always have larger effective sample sizes than the restricted weights.

\begin{figure}[h]
  \centering
    \includegraphics[scale=0.75]{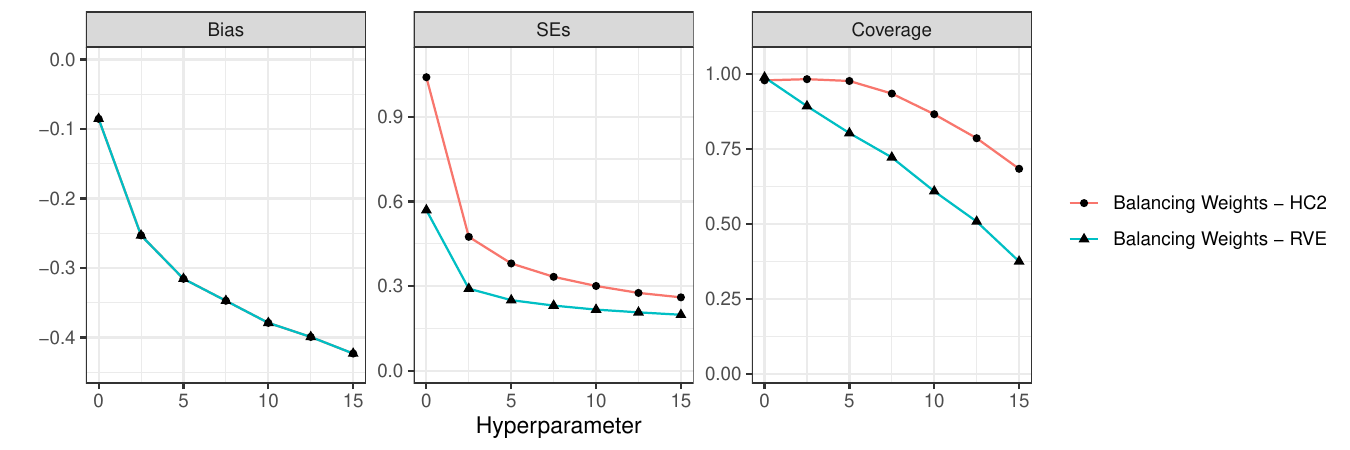}
    \caption{Standard error size and coverage across all simulation runs -- poor overlap scenario. $X$-axis is $\lambda$ which controls the bias variance tradeoff for the balancing weights.}
  \label{fig:se-bad-overlap}
\end{figure}

Figure~\ref{fig:se-bad-overlap} contains the simulation results in the poor overlap scenario. Here we find that the residual variance estimator produces smaller standard error estimates for all values of the hyperparameter. The differences, however, are quite small for larger values of $\lambda$. However, for both methods, we now find that the value of hyperparameter affects the coverage rate: for larger values of $\lambda$, the higher levels of bias result in under coverage. Notably, the larger SEs from using HC2 perform better in that the confidence interval doesn't undercover unless the bias is larger compared to the RVE SEs. In sum, when overlap is good or moderate, the simulations confirm the the estimated standard errors are valid, albeit conservative. When overlap is poor, the estimated confidence intervals can under-cover when the optimization problem doesn't sufficiently prioritize bias. As such, when overlap is poor analysts should generally select small non-zero values of $\lambda$ and use HC2 standard errors.

Next, we report the results from the good and medium overlap scenarios. Figure~\ref{fig:se-good-overlap} contains the simulation results in the good overlap scenario. Here, we observe that the residual variance estimator produces uniformly larger standard errors than the HC2 standard errors. This holds true for any effective sample sizes. However, the difference in standard error magnitude results in very similar coverage rates: both methods over cover the nominal 95\% coverage rate. This results in conservative inferences that are typical for weighting estimators. Specifically, both are conservative in these scenarios, with coverage above 95\% for all values of the hyperparameter, which is typical for weighting estimators. As such, in this scenario, the concerns that \citet{reifeis2022variance} observe with IPW estimators for the ATT do not arise.

\begin{figure}[h]
  \centering
    \includegraphics[scale=0.7]{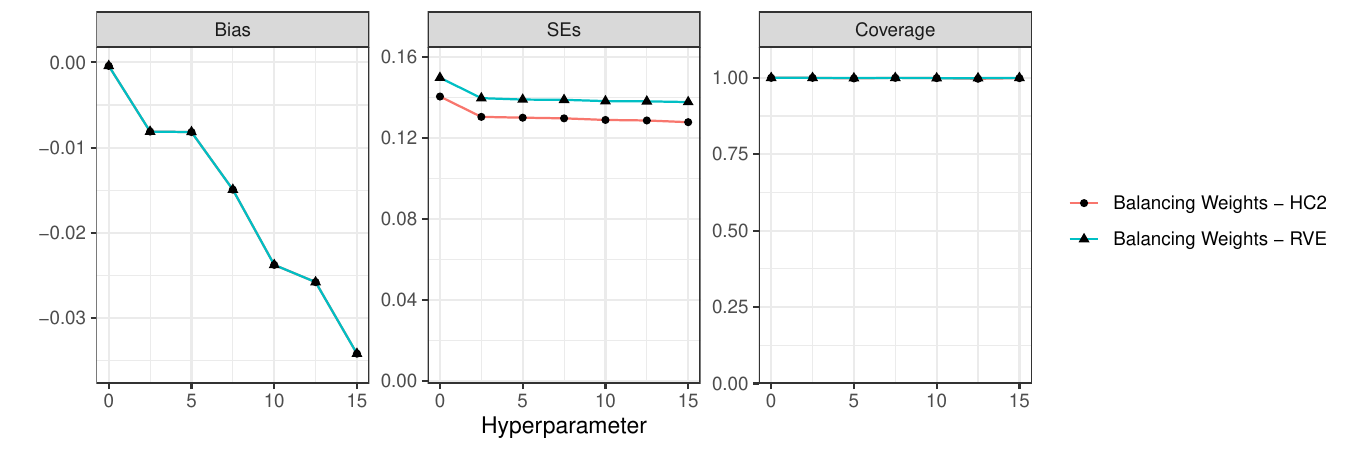}
    \caption{Standard error size and coverage across all simulation runs -- good overlap scenario. $X$-axis is $\lambda$ which controls the bias variance tradeoff for the balancing weights.}
  \label{fig:se-good-overlap}
\end{figure}
 
Figure~\ref{fig:se-med-overlap} contains the results from the medium overlap scenario. In the medium overlap scenario, however, the HC2 SEs are slightly larger across the differing values of $\lambda$. However, again we find that both SEs produce confidence errors that are conservative and over cover.
 
\begin{figure}[h]
  \centering
    \includegraphics[scale=0.7]{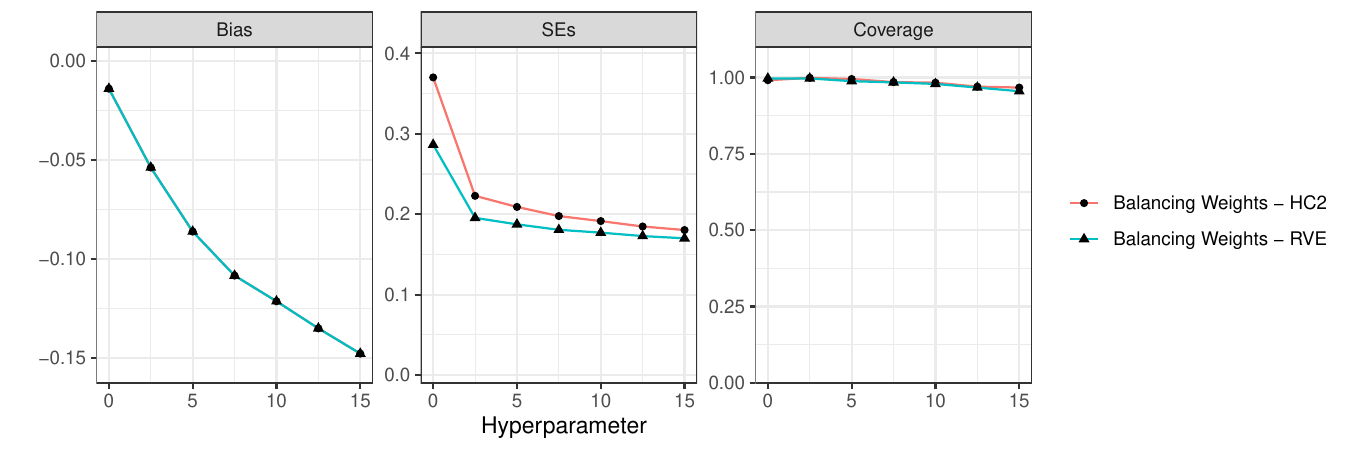}
    \caption{Standard error size and coverage across all simulation runs -- medium overlap scenario. $X$-axis is $\lambda$ which controls the bias variance tradeoff for the balancing weights.}
  \label{fig:se-med-overlap}
\end{figure}

\section{Additional Empirical Results}

\subsection{Full List of Covariates}

\begin{longtable}{l}
\caption{List of Baseline Covariates}\\
\label{tab:covs}
Covariate Name \\
  \hline
Age Spline 1  \\ 
  Age Spline 2  \\ 
  Age Spline 3  \\ 
  Age Spline 4  \\ 
  Age Spline 5  \\ 
  Age Spline 6  \\ 
  No. Comorbidities  \\ 
  Female  \\ 
  Hispanic  \\ 
  Sepsis  \\ 
  Disability  \\ 
  Under 65  \\ 
  Medicare  \\ 
  Medicaid  \\ 
  Private Insurance  \\ 
  Self Insurance  \\ 
  Other  \\ 
  \midrule
  Comorbidities \\
  \midrule
  Congestive Heart Failure  \\ 
  Cardiac Arrhythmias  \\ 
  Valvular Disease  \\ 
  Pulmonary Circulation Disorders  \\ 
  Peripheral Vascular Disorders  \\ 
  Hypertension, Uncomplicated  \\ 
  Paralysis  \\ 
  Other Neurological Disorders  \\ 
  Chronic Pulmonary Disease  \\ 
  Diabetes, Uncomplicated  \\ 
  Diabetes, Complicated  \\ 
  Hypothyroidism  \\ 
  Renal Failure  \\ 
  Liver Disease  \\ 
  Peptic Ulcer Disease Excluding Bleeding  \\ 
  AIDS/HIV  \\ 
  Lymphoma  \\ 
  Metastatic Cancer  \\ 
  Solid Tumor Without Metastasis  \\ 
  Rheumatoid Arthritis/Collagen Vascular  \\ 
  Coagulopathy  \\ 
  Obesity  \\ 
  Weight Loss  \\ 
  Fluid and Electrolyte Disorders  \\ 
  Blood Loss Anemia  \\ 
  Deficiency Anemia  \\ 
  Alcohol Abuse  \\ 
  Drug Abuse  \\ 
  Psychoses  \\ 
  Depression  \\ 
  Hypertension, Complicated  \\ 
  \midrule
  EGS Conditions \\
  \midrule
  Colorectal: Bleeding  \\ 
  Colorectal: Cancer  \\ 
  Colorectal: Colitis  \\ 
  Colorectal: Diverticulitis  \\ 
  Colorectal: Fistula  \\ 
  Colorectal:Hemorrhage  \\ 
  Colorectal: Megacolon  \\ 
  Colorectal:Colostomy/Ileostomy  \\ 
  Colorectal:Rectal Prolapse  \\ 
  Colorectal: Regional Enteritis  \\ 
  Colorectal: Anorectal Stenosis, Polyp, Ulcer, NEC  \\ 
  Gen Abdominal: RP Abscess  \\ 
  Gen Abdominal: Hemoperitoneum  \\ 
  Gen Abdominal: Abdominal Mass  \\ 
  Gen Abdominal: Abdominal Pain  \\ 
  Gen Abdominal:
               Peritonitis  \\ 
  HPB: Gallstones and Related Diseases  \\ 
  HPB: Hepatic  \\ 
  HPB: Pancreatitis  \\ 
  Hernia: Diaphragmatic  \\ 
  Hernia:Femoral  \\ 
  Hernia: Incisional  \\ 
  Hernia:Inguinal  \\ 
  Hernia: Other  \\ 
  Hernia:Umbilical  \\ 
  Hernia:Ventral  \\ 
  Intestinal Obstruction: Adhesions  \\ 
  Intestinal Obstruction:Incarcerated Hernias  \\ 
  Intestinal Obstruction: Intussusceptions  \\ 
  Intestinal Obstruction: Obstruction  \\ 
  Intestinal Obstruction: Volvulus  \\ 
  Resuscitation: Acute Respiratory Failure  \\ 
  Resuscitation: Shock  \\ 
  Skin \& Soft Tissue: Abscesses  \\ 
  Skin \& Soft Tissue: Cellulitis  \\ 
  Skin \& Soft Tissue: Compartment Syndrome  \\ 
  Skin \& Soft Tissue: Fasciitis  \\ 
  Skin \& Soft Tissue: Pressure Ulcers  \\ 
  Skin \& Soft Tissue: Wound Care  \\ 
  Upper GI: Appendix  \\ 
  Upper GI: Bleed  \\ 
  Upper GI: Bowel Perforation  \\ 
  Upper GI: Fistula  \\ 
  Upper GI: Gastrostomy  \\ 
  Upper GI: Ileus  \\ 
  Upper GI: Meckel's  \\ 
  Upper GI: Peptic Ulcer Disease  \\ 
  Upper GI: Small Intestinal Cancers  \\ 
  Vascular Acute Intestinal Ischemia  \\ 
  Vascular: Acute Peripheral Ischemia  \\ 
  Vascular: Phlebitis  \\ 
   \bottomrule
\end{longtable}

\subsection{Hyperparameter Selection}
\label{sec:hyperparameter}

To better understand the role of $\lambda$, we conducted an additional analysis. In this analysis, we estimated weights for a series of $\lambda$ values. For each $\lambda$ value, we calculated the PBR measure outlined in Section~\ref{sec:sim} and the effective sample size. In this analysis, we use a within-hospital specific measure of PBR. Specifically, we calculate the vectors of standardized differences within each hospital, which we denote as $\hat{\Delta}_{huw}$ and $\hat{\Delta}_{hw}$. For $K$ covariates with $H$ hospitals, the dimension of the vectors is now $K \times H = J$. We can then calculate the reduction in bias for all covariates within all hospitals as:
\[ 
PBR_H = 100\% \times \left[ \frac{1}{J} \sum_J |\hat{\Delta}_{hw}| \; \Big/ \; \frac{1}{H} \sum_j |\hat{\Delta}_{huw}| \right].
\]
\noindent We plot $PBR_H$ against the effective sample size for the overall EGS population and the population that received surgery in Figure~\ref{fig:lambda-hosp}. In both plots, we observe that when $\lambda$ is zero we maximize the amount of bias reduction. Increasing $\lambda$ by a small amount decreases the bias reduction a modest amount but dramatically increases the effective sample size. Overall, we observe the clear bias-variance trade off that is controlled by the value of $\lambda$.

\begin{figure}[h]
  \centering
  \begin{subfigure}[t]{0.45\textwidth}
    \includegraphics[width=\textwidth]{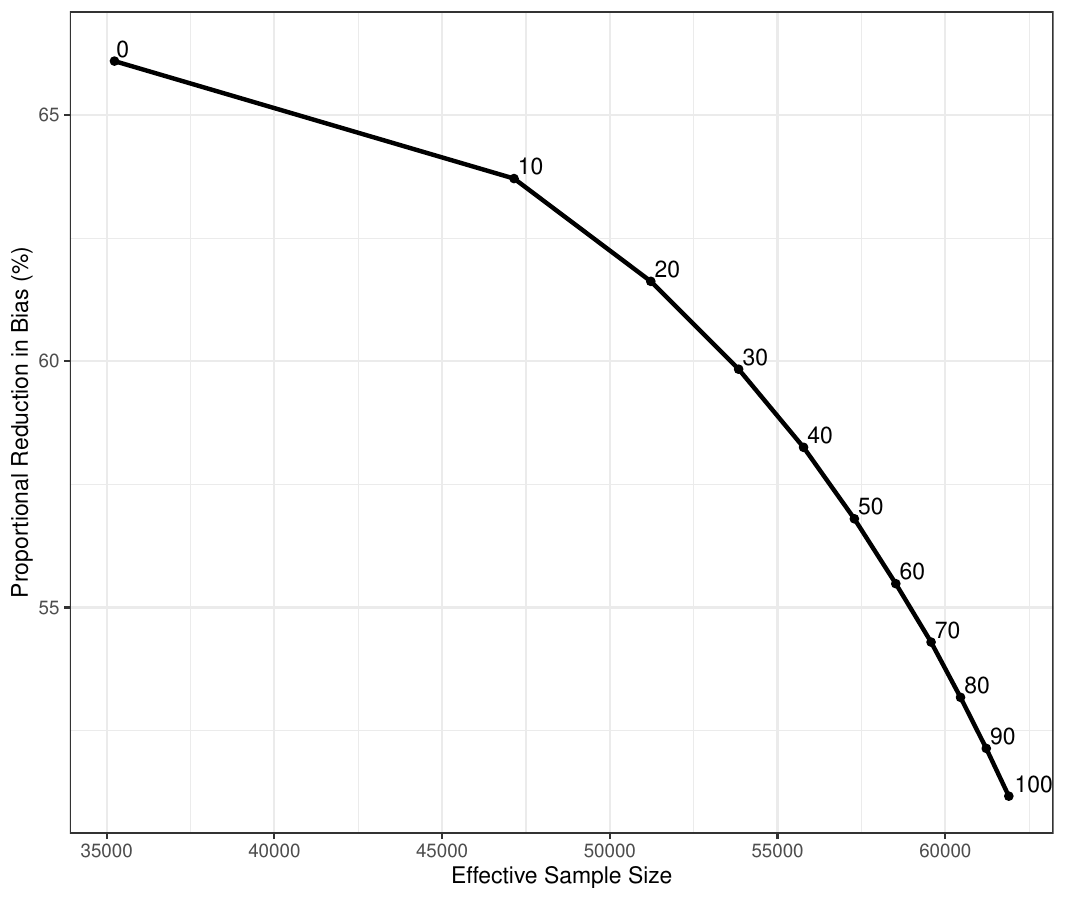}
    \caption{Surgery Population}
    \label{fig:state3}
  \end{subfigure}
  \begin{subfigure}[t]{0.45\textwidth}
    \includegraphics[width=\textwidth]{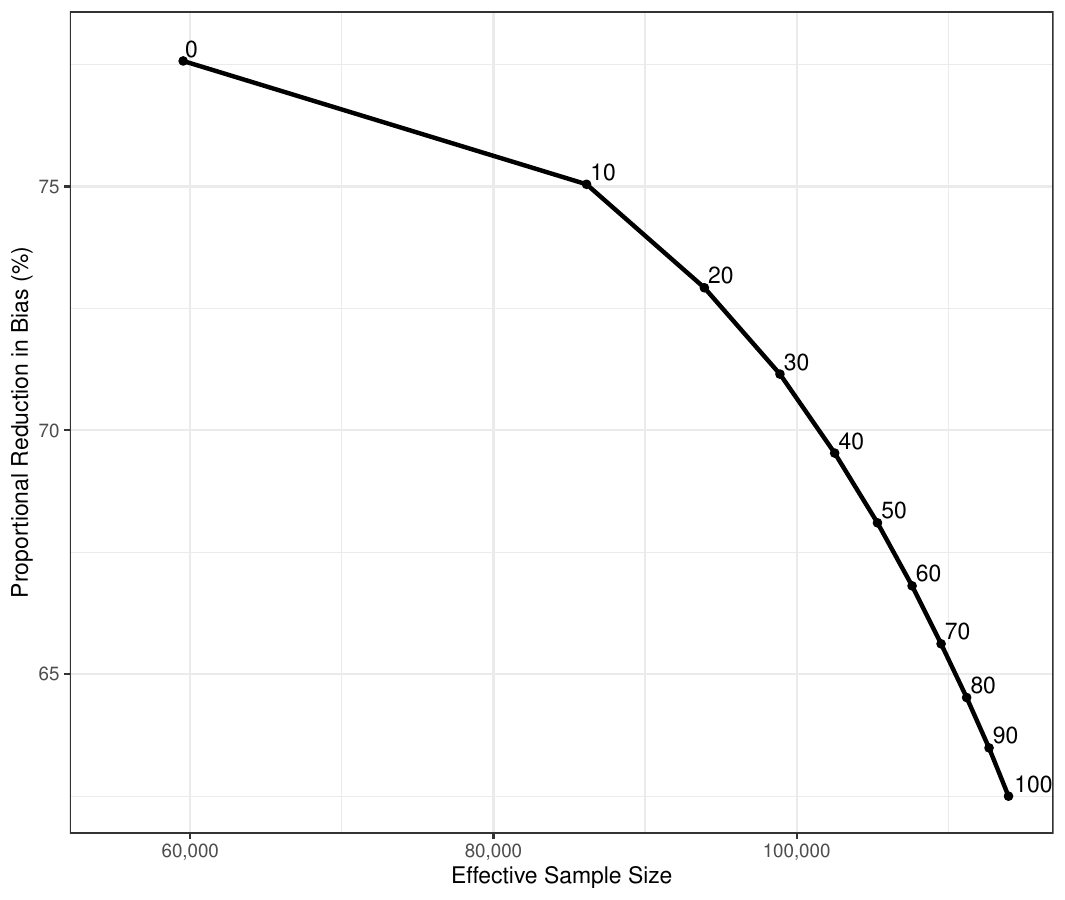}
    \caption{EGS Population}
    \label{fig:state4}
  \end{subfigure}
  \caption{Estimated Bias-Variance Tradeoff as a Function of $\lambda$ values. Each dot represents the estimated Percent Bias Reduction and average effective size for balancing weights with different values of $\lambda$.}
  \label{fig:lambda-hosp}
\end{figure}

\subsection{Balance Diagnostics}

Next, we report on balance statistics. First, we report on the imbalance in the baseline covariates before weighting. In Table~\ref{tab:bal}, we report balance statistics for the subset of covariates with the largest imbalances. In Figure~\ref{fig:bal.plot1}, we plot the balance results for the analysis that uses the full set of covariates but does not account for hospitals. Figure~\ref{fig:bal.plot2} includes the results for the within hospital analysis for those patients that received operative care. Finally, Figure~\ref{fig:bal.plot2} contains the results for $X^y$ subset of covariates.

\begin{table}[ht]
\centering
\caption{Balance Table for EGS Disparity Data: Selected Covariates with Largest Imbalances at Baseline}
\label{tab:bal}
\begin{tabular}{lcccc}
  \toprule
 & Whites & Blacks & Std. Diff. \\ 
  \midrule
Age & 62.5 & 55.9 & -0.29 \\ 
  Under 65 & 50.4 & 66.4 & 0.27 \\ 
  Medicare & 53.5 & 42.4 & -0.18 \\ 
  Medicaid & 8.3 & 20.8 & 0.28 \\ 
  Cardiac Arrhythmias & 25.5 & 17.8 & -0.16 \\ 
  Hypothyroidism & 13.4 & 5.3 & -0.25 \\ 
  Renal Failure & 14.9 & 22.4 & 0.15 \\ 
  Depression & 13.4 & 7.6 & -0.16 \\ 
  Hypertension, Complicated & 13.6 & 21.3 & 0.16 \\ 
 \bottomrule
\end{tabular}
\end{table}

\begin{figure}[htbp]
  \centering
    \includegraphics[scale=0.6]{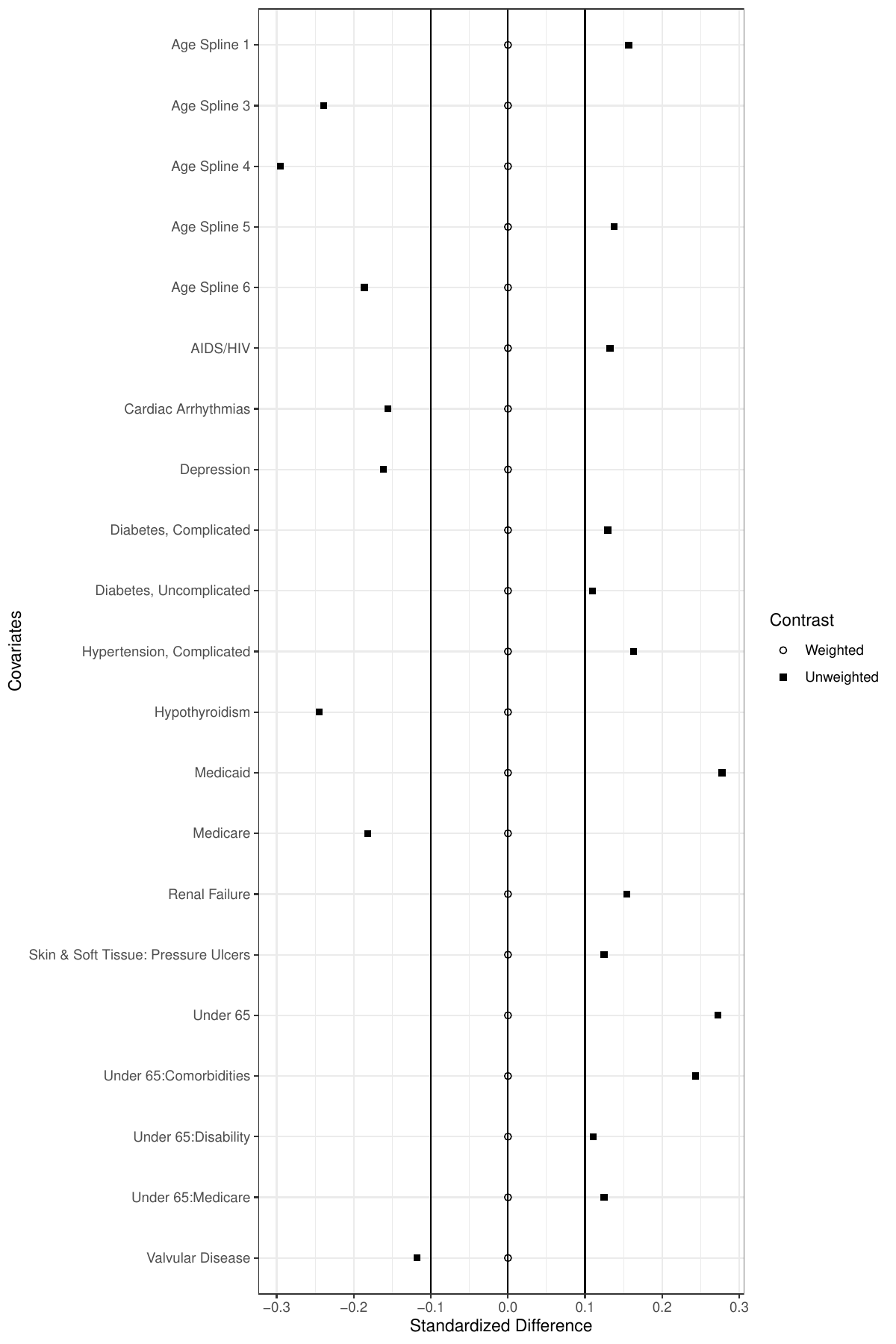}
     \caption{Standardized difference before and after weighting for the covariates with the largest baseline imbalances, not accounting for hospitals}
  \label{fig:bal.plot1}
\end{figure}

\begin{figure}[htbp]
  \centering
    \includegraphics[scale=0.6]{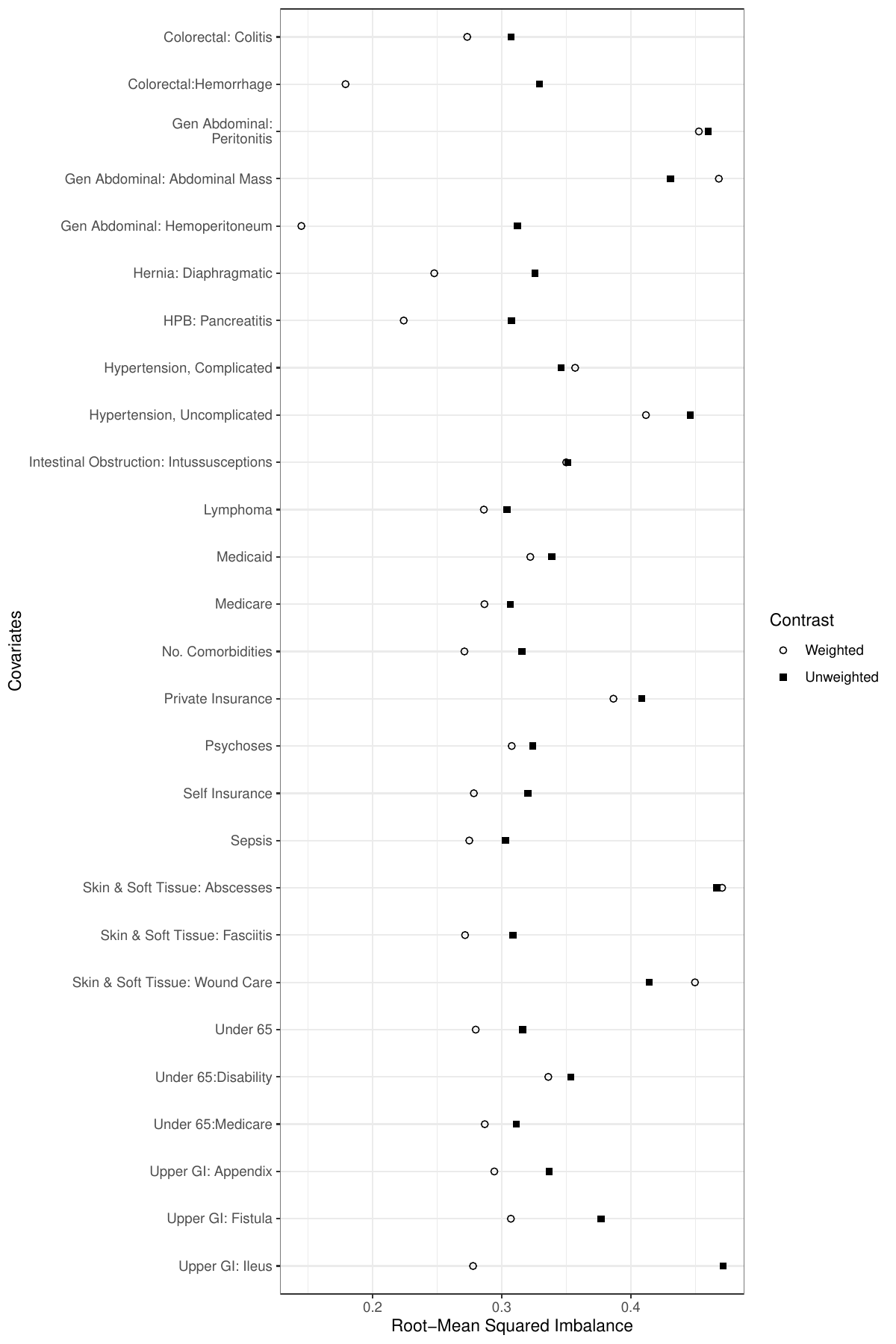}
     \caption{RMSI before and after weighting for the covariates with the largest baseline imbalances for surgery patients.}
  \label{fig:bal.plot2}
\end{figure}

\begin{figure}[htbp]
  \centering
    \includegraphics[scale=0.6]{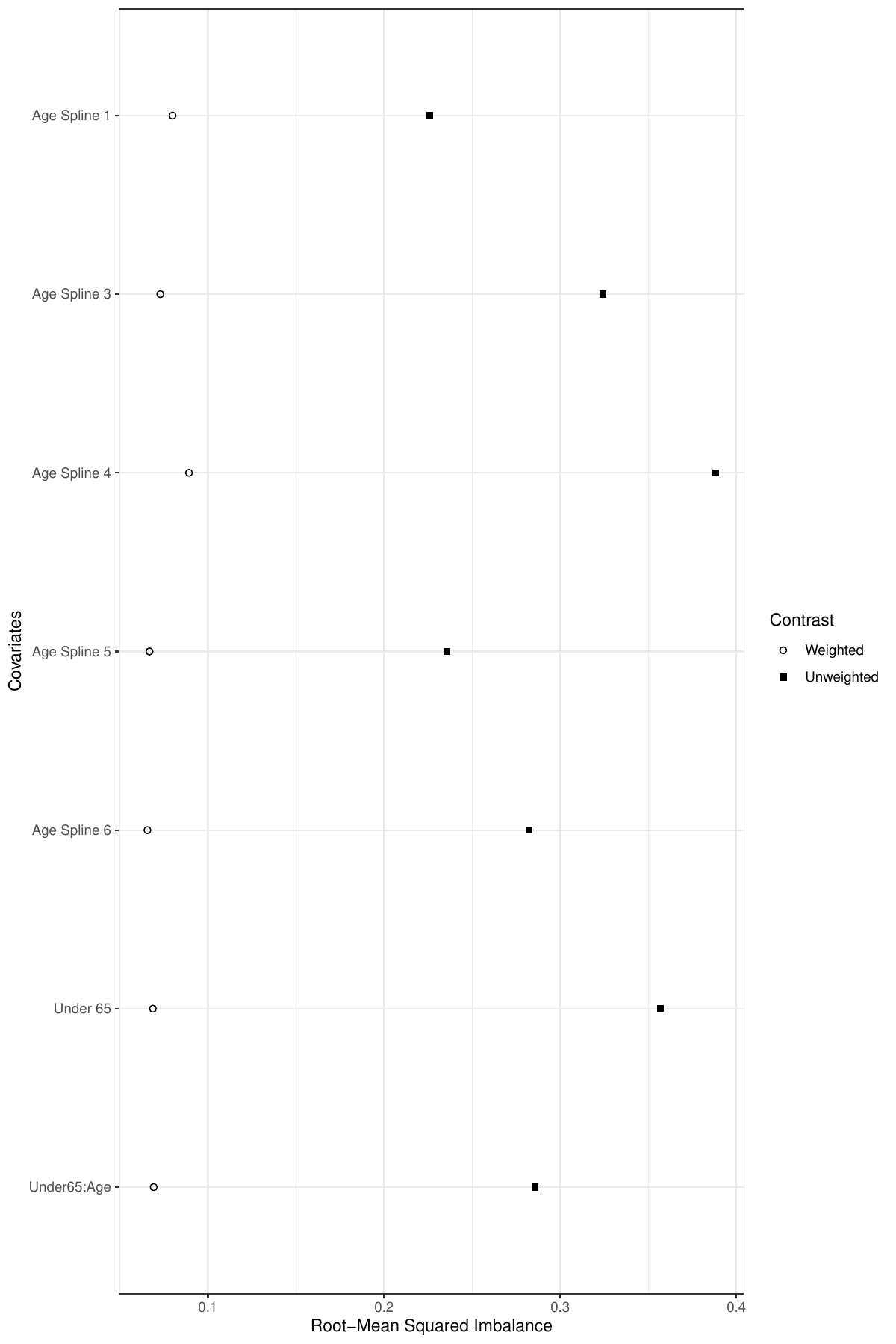}
     \caption{RMSI before and after weighting for the covariates with the largest baseline imbalances for outcome-allowable covariates.}
  \label{fig:bal.plot3}
\end{figure}

\subsection{Additional Disparity Estimates}
\label{sec:addl_estimates}
In the primary analysis, we used an indicator for an adverse event as the primary outcome. As we detailed in the main text, the adverse event indicator is 1 if the patient either died, had a complication, or prolonged length of stay. Here, we report the results using each of these measures as outcomes. Table~\ref{tab.egs1} contains the results for the full EGS population. Table~\ref{tab.egs2} contains the results for the subset of the EGS population that received surgery.

\begin{table}[htb]
\centering
\begin{threeparttable}
\caption{Estimated racial disparities for outcomes for the full EGS population, with different adjustment sets}
\label{tab.egs1}
\begin{tabular}{lcccc}
\toprule
& Unadjusted & $X^y$-adjusted & $X$-adjusted & $X$-adjusted disparity \\
 &            & disparity & disparity  & within hospital \\
\midrule
 \multirow{ 2}{3.5cm}{Death}  & 0.001 & 0.026 & 0.004 & -0.0001  \\  
 & [ 0 , 0.002 ] & [ 0.023 , 0.029 ] & [ 0.002 , 0.006 ] & [ -0.003 , 0.002 ]  \\  
\midrule
\multirow{ 2}{3.5cm}{Complication}  & 0.001 & 0.492 & 0.002 & 0.0001  \\  
& [ -0.002 , 0.003 ] & [ 0.487 , 0.498 ] & [ -0.001 , 0.006 ] & [ -0.007 , 0.006 ]  \\ 
\midrule
\multirow{ 2}{3.5cm}{Prolonged Length of Stay} & -0.036 & -0.023 & 0.013 & 0.007  \\  
& [ -0.037 , -0.034 ] & [ -0.027 , -0.019 ] & [ 0.011 , 0.016 ] & [ -0.007 , 0.006 ]  \\  
\bottomrule
\end{tabular}
\begin{tablenotes}[para]
{ \footnotesize Note: Numbers in brackets are 95\% confidence intervals.} 
\end{tablenotes}
\end{threeparttable}
\end{table}

\begin{table}[htb]
\centering
\begin{threeparttable}
\caption{Estimated Racial Disparities for Outcomes Following Surgery for an EGS Condition}
\label{tab.egs2}
\begin{tabular}{lcccc}
\toprule
& Unadjusted  & $X\text{-adjusted}$ & $X\text{-adjusted disparity}$ \\
 &             & disparity  & within hospital \\
\midrule
 \multirow{ 2}{3.5cm}{Death} & 0.003 & 0.004 & 0.0001  \\ 
 & [ 0.001 , 0.004 ] & [ 0.002 , 0.007 ] & [ -0.003 , 0.004 ]  \\  
\midrule
\multirow{ 2}{3.5cm}{Complication} & 0.0067 & 0.007 & 0.002  \\ 
& [ 0.003 , 0.01 ] & [ 0.002 , 0.012 ] & [ -0.007 , 0.011 ]  \\ 
\midrule
\multirow{ 2}{3.5cm}{Prolonged Length of Stay} & -0.026 & 0.017 & 0.009  \\  
& [ -0.028 , -0.023 ] & [ 0.014 , 0.02 ] & [ 0.003 , 0.014 ]  \\  
 \bottomrule
\end{tabular}
\begin{tablenotes}[para]
{ \footnotesize Note: Numbers in brackets are 95\% confidence intervals.} 
\end{tablenotes}
\end{threeparttable}
\end{table}

\clearpage
\bibliographystyle{chicago}
\bibliography{references}